\documentclass[11pt,a4paper]{article}

\usepackage{amsmath,amssymb,amsfonts,amsthm}
\usepackage{mathtools}
\usepackage{bbm}
\usepackage{graphicx}
\usepackage{subcaption}
\usepackage{verbatim}

\newcommand{\safeincludegraphics}[2][]{%
\IfFileExists{#2}{%
\includegraphics[#1]{#2}%
}{%
\fbox{\parbox[c][0.28\textheight][c]{0.85\linewidth}{\centering Missing figure:\\ \texttt{\detokenize{#2}}}}%
}%
}

\usepackage{hyperref}
\usepackage{enumitem}
\usepackage{algorithm}
\usepackage{tikz}
\usepackage{geometry}
\usetikzlibrary{arrows.meta, positioning}
\newtheorem{theorem}{Theorem}[section]
\newtheorem{proposition}[theorem]{Proposition}
\newtheorem{lemma}[theorem]{Lemma}

\theoremstyle{definition}

\newtheorem{definition}[theorem]{Definition}

\theoremstyle{remark}
\newtheorem{remark}[theorem]{Remark}

\newcommand{\E}{\mathbb{E}}
\newcommand{\cE}{\mathcal{E}}

\newcommand{\C}{\mathcal{C}}
\newcommand{\F}{\mathcal{F}}

\newcommand{\R}{\mathbb{R}}

\DeclareMathOperator*{\essinf}{ess\,inf}
\DeclareMathOperator*{\esssup}{ess\,sup}

\begin{document}

\title{Risk Measures under Paired-Ambiguity: A Deep Learning Reflected BSDE Framework}
\author{Nacira Agram $^{1}$, Jan Rems $^{2}$ and Emanuela Rosazza Gianin $^{3}$}
\date{\today}
\maketitle

\footnotetext[1]{Department of Mathematics, KTH Royal Institute of Technology and Digital Futures, 100 44, Stockholm, Sweden. 
Email: nacira@kth.se. Work supported by the Swedish Research Council grant (2020-04697).}

\footnotetext[2]{Department of Mathematics, University of Ljubljana, Ljubljana, Slovenia.  Email: jan.rems@fmf.uni-lj.si. Work supported by Slovenian Research and Innovation Agency, research core funding No.P1-0448.}

\footnotetext[3]{Department of Statistics and Quantitative Methods, University of Milano-Bicocca, Milano, Italy. Email:  emanuela.rosazza1@unimib.it. Member of GNAMPA, INdAM, Italy. Work supported by Gnampa Research Project 2024 (PRR-20231026-073916-203).}

\begin{abstract}
We study optimal stopping under dynamic risk measures with simultaneous
ambiguity in the probability model and the discount rate. We introduce a
paired ambiguity framework combining Girsanov model uncertainty with cash
subadditive risk evaluation and characterize the stopping value by an upper
reflected backward stochastic differential equation (BSDE). We establish
structural properties of the resulting stopping operator and study quadratic
drivers associated with entropic risk measures, obtaining explicit stopping
rules in several benchmark cases. We then develop a deep learning scheme for
the reflected quadratic BSDE. The convergence analysis uses discrete
reflection and truncation to reduce the quadratic problem to a globally
Lipschitz system and combines reflected BSDE discretization estimates with
neural network approximation errors. Numerical experiments for American
options illustrate the effects of discount rate and entropic ambiguity on
stopping values and exercise decisions.

\end{abstract}

\section{Introduction}

Dynamic risk measures have become a fundamental tool for the quantitative
assessment of financial positions under uncertainty (see, among many others, \cite{ADEHK, delb, fritt-rg, CDK, barrieu-elk, bion-nadal, detl-scand}). Unlike classical conditional
expectations, dynamic risk measures allow nonlinear preferences, model
uncertainty, recursive evaluation and time-consistent risk assessment. Following
the pioneering works on BSDEs and $g$-expectations
\cite{ELKPQ,peng,CHMP,BCHMP}, it is now well understood that a large class of
dynamic risk measures admits a representation through backward stochastic
differential equations (BSDEs), where the generator completely characterizes the
underlying notion of risk, ambiguity and investor preferences. This connection
has subsequently been developed in the context of nonlinear pricing, coherent and convex risk measures, and utility valuation
\cite{ELK-quenez,rg,barrieu-elk, DPR}.

Among the various classes of dynamic risk measures, cash-subadditive risk
measures have attracted considerable attention because they naturally incorporate
ambiguity in discounting and interest rates. Unlike the classical
cash-additive framework, they allow uncertainty in the time value of money and
therefore provide a more flexible approach to long term financial valuation.
Their BSDE characterization and dual representation were established by El
Karoui and Ravanelli \cite{ELK-rav}, who showed that discount rate ambiguity
induces a nonlinear dependence of the driver on the value process. See also, e.g., \cite{dinunno-rg, mastrog-rg} for a discussion on cash-subadditive risk measures.

Optimal stopping under dynamic risk measures extends the classical theory of
optimal stopping by replacing conditional expectations with nonlinear dynamic
risk evaluations. Such problems arise naturally in the valuation of
American derivatives under model uncertainty, optimal liquidation,
insurance contracts, real options, and investment timing problems. In the
Brownian framework, Bayraktar, Karatzas and Yao
\cite{bayrakt-etal} studied optimal stopping for convex dynamic risk measures,
while Quenez and Sulem \cite{quenez-sulem} established a reflected BSDE
representation for robust optimal stopping under dynamic risk measures with
jumps, thereby extending the nonlinear Snell envelope approach initiated by El
Karoui et al. \cite{ELK-etal_Reflect}.

The objective of the present paper is to develop a unified framework for optimal
stopping under \emph{paired ambiguity}, where uncertainty simultaneously affects
the probabilistic model and the discount mechanism. The first source of
ambiguity is represented through changes of probability measure via Girsanov
transformations, while the second corresponds to uncertainty in stochastic
discounting, following the cash-subadditive framework of
\cite{ELK-rav}. Combining these two mechanisms leads to a new family of dynamic
risk measures (called \textit{paired-ambiguity dynamic risk measures}) that naturally generalizes both classical entropic risk measures
and cash-subadditive risk measures.

Our first contribution is to formulate the corresponding optimal stopping
problem and to characterize its value process by means of an upper reflected
BSDE. The upper obstacle naturally appears because the problem is formulated in
terms of liabilities, so that immediate stopping corresponds to the negative of
the payoff process. We establish a reflected BSDE representation theorem and
show that the optimal stopping strategy is given by the first hitting time of
the reflected solution against the obstacle.

Our second contribution concerns the structural properties of the resulting
stopping operator. We prove that it inherits the principal axioms of dynamic
risk measures and investigate its geometric properties, including monotonicity,
cash-subadditivity, cash-additivity, quasi-concavity, concavity under affine
drivers, and the failure of convexity in general. These results indicate that
the stopping operator should be viewed as a dynamic risk measure acting on
adapted stochastic processes rather than merely on terminal random variables.

Special attention is devoted to quadratic drivers associated with entropic risk
measures. Quadratic BSDEs play a central role in mathematical finance through
their connection with exponential utility, entropic risk measures and robust
valuation \cite{kobyl-bsde,barrieu-elk, dinunno-rg}. We revisit this framework in the reflected
setting and derive the corresponding representation of entropic stopping risk
measures. In particular, we obtain explicit stopping rules in several tractable
situations and illustrate how discount ambiguity and model ambiguity interact
within the reflected BSDE framework. The existence and uniqueness of reflected
quadratic BSDEs rely on the theory developed by Kobylanski et al.
\cite{kobyl-etal}.

Finally, we propose a deep learning approximation scheme for solving the
resulting reflected BSDEs. Since the drivers considered in this paper may
exhibit quadratic growth in the control variable, standard convergence results
for Lipschitz BSDEs are no longer directly applicable. Deep schemes for reflected BSDEs with
Lipschitz drivers have been developed both by hard projection \cite{hure-etal}
and by regularization \cite{hu-zou}; for quadratic drivers without reflection
the truncation analysis of \cite{chassagneux-richou} is available, and the
reflected quadratic case has been treated in \cite{sun-liang-tang} for a scheme
in which the conditional expectations are computed exactly. We combine the last
two, the truncation being made exact by the uniform bound on the integrand of
the discretely reflected equation, and obtain a convergence estimate in which
the network approximation errors appear explicitly.
Numerical experiments for American options demonstrate the practical
performance of the method and illustrate the impact of ambiguity on optimal
stopping decisions.

The paper is organized as follows. Section~2 introduces the paired-ambiguity
dynamic risk measure and its dual representation. Section~3 formulates the
optimal stopping problem and derives the reflected BSDE characterization
together with the main properties of the stopping operator. Section~4 is devoted
to the entropic case and explicit stopping representations. Section 5 develops the deep learning approximation scheme and establishes
its convergence. Section 6 presents financially meaningful examples of
dynamic risk measures generated by reflected BSDEs. Finally, Section 7
reports numerical experiments illustrating the theoretical results.

\section{Paired-Ambiguity Risk Measures and Reflected BSDE Representation}
\label{sec:stopping}

In this section we formulate the optimal stopping problem associated with the
paired-ambiguity dynamic risk measure introduced here below and
derive its reflected BSDE representation. The framework combines two different
sources of uncertainty. The first corresponds to ambiguity in the underlying
probabilistic model and is represented through Girsanov changes of measure. The
second corresponds to ambiguity in stochastic discounting, leading to a
cash-subadditive dynamic evaluation. As a consequence, the resulting stopping
problem extends the classical nonlinear optimal stopping framework to a setting
where both model uncertainty and discount rate uncertainty are treated
simultaneously.

The formulation is inspired by the optimal stopping problem for convex dynamic
risk measures studied by Bayraktar, Karatzas and Yao
\cite{bayrakt-etal}, while incorporating the cash-subadditive dual
representation introduced by El Karoui and Ravanelli \cite{ELK-rav}. The
combination of these two ingredients leads naturally to an upper reflected BSDE,
which characterizes both the stopping value and the optimal exercise strategy.

\subsection{Paired-ambiguity optimal stopping problem}

Consider a bounded adapted payoff process
\[
\xi=(\xi_t)_{0\le t\le T}.
\]
For every stopping time \(\tau\in\mathcal T_t\), the paired-ambiguity dynamic
risk measure is defined by
\begin{equation}\label{eq:rho-csa-g}
\rho_{t,\tau}(\xi_\tau)=
\sup_{(\alpha,\beta)\in\mathcal A}
\mathbb E^{\mathbb Q^\alpha}
\left[
e^{-\int_t^\tau\beta_s\,ds}
(-\xi_\tau)
-
\int_t^\tau
e^{-\int_t^s\beta_u\,du}
g^*(s,\alpha_s,\beta_s)\,ds
\;\middle|\;
\mathcal F_t
\right],
\end{equation}
where the process \(\alpha\) models ambiguity in the probability measure through
Girsanov transformations, \(\beta\) represents ambiguity in stochastic
discounting, and $\mathcal{A}$ is a given set over which there is paired-ambiguity. The function \(g^*\) denotes the convex conjugate of the BSDE driver \(g\).

The corresponding optimal stopping problem is
\begin{equation}\label{eq:optimal stopping}
V_t
=
\essinf_{\tau\in\mathcal T_t}
\rho_{t,\tau}(\xi_\tau).
\end{equation}

Unlike the classical cash-additive framework, the evaluation
\eqref{eq:rho-csa-g} simultaneously accounts for uncertainty in the probabilistic
model and in the discount mechanism. Consequently, the stopping value
incorporates both model ambiguity and interest rate ambiguity within a unified
dynamic risk measure framework. Note that \eqref{eq:rho-csa-g} corresponds to the dual representation of El Karoui and Ravanelli \cite{ELK-rav} for $\mathcal{A}$ as in \eqref{eq: A}. Here, instead, $\mathcal{A}$ is considered general.

The representation of the value process follows the same general philosophy as
nonlinear Snell envelopes generated by BSDEs. However, the presence of the
discount process modifies the dual representation and changes the associated BSDE
driver. The reflected BSDE obtained below therefore extends the standard
representation by incorporating both sources of ambiguity into a single
nonlinear evaluation.

\subsection{Standing assumptions} \label{sec: assumpt}

We work on a complete filtered probability space
$(\Omega,\mathcal F,(\mathcal F_t)_{0\le t\le T},\mathbb P)$ satisfying the
usual conditions and carrying a $d$-dimensional Brownian motion $W$. For
$t\in[0,T]$, let $\mathcal T_t$ denote the set of
$(\mathcal F_s)$-stopping times with values in $[t,T]$.
We use the standard spaces
\[
\mathcal S^2
=
\left\{Y:\ Y\text{ is adapted and continuous, }
\mathbb E\left[\sup_{0\le t\le T}|Y_t|^2\right]<\infty\right\}
\]
and
\[
\mathcal H^2
=
\left\{Z:\ Z\text{ is progressively measurable and }
\mathbb E\left[\int_0^T |Z_t|^2\,dt\right]<\infty\right\}.
\]

Let $\xi=(\xi_t)_{t\in[0,T]}$ be an adapted bounded continuous process with
$\xi\in\mathcal S^2$. The process $\xi$ represents the financial payoff received
upon stopping, so the upper obstacle for the risk process is $-\xi$ and the
terminal condition is compatible with the obstacle, namely $Y_T=-\xi_T$. Let
\[
g:\Omega\times[0,T]\times\mathbb R\times\mathbb R^d\to\mathbb R
\]
be a progressively measurable driver.
We assume the following:
\begin{itemize}
\item $g(t, \cdot, \cdot)$ is continuous with linear-quadratic growth, i.e. there exists a constant $C>0$ such that for any $y \in \mathbb{R}$, $z \in \R^d$ 
\begin{equation*}
|g(t,y,z)|  \leq |g(t,0,0)| +C |y| + C |z|^2, \qquad
g(\cdot,0,0)\in\mathcal H^2.
\end{equation*}
\item For any $M >0$ there exists $C>0$ such that for any $y \in [-M,M]$, $z \in \R^d$ 
\begin{equation*}
|\frac{\partial g}{\partial z}(t,y,z)|  \leq C (1+ |z|).
\end{equation*}
\item For any $\epsilon >0$ there exists $C_{\epsilon}>0$ such that for any $y \in \R$, $z \in \R^d$ 
\begin{equation*}
\frac{\partial g}{\partial y}(t,y,z) \leq C_{\epsilon} + \epsilon |z|^2.
\end{equation*}
\end{itemize}
Note that the last assumption is automatically satisfied if $g$ is differentiable and decreasing in $y$.

The assumptions guarantee that the fixed horizon BSDEs and the upper
reflected BSDE below have a unique solution and the Comparison Theorem holds for both the BSDEs and the reflected BSDEs. See Theorems 2.3 and 2.6 of Kobylanski \cite{kobyl-bsde} for BSDEs and Corollary 1 and Proposition 3.2 of Kobylanski et al. \cite{kobyl-etal} for reflected BSDEs.
Consider a BSDE with driver $g$ of the form
\begin{equation}\label{eq: risk measure-bsde}
Y_t
=
-\xi_T
+
\int_t^T g(s,Y_s,Z_s)\,ds
-
\int_t^T Z_s\,dW_s.
\qquad 0\le t\le T,
\end{equation}

We recall the following definition of (upper) reflected BSDEs. See \cite{ELK-etal_Reflect, kobyl-etal} and the references therein.

\begin{definition}[Upper reflected BSDE]
\label{def:upper-rbsde}
Let \(L=(L_t)_{0\le t\le T}\) be an adapted obstacle process and let
\(\eta\) be an \(\mathcal F_T\)-measurable terminal condition satisfying
\[
\eta\le L_T.
\]
A triple \((Y,Z,K)\) is called a solution of the upper reflected BSDE
with generator \(g\), terminal condition \(\eta\), and upper obstacle \(L\)
if
\begin{equation}
\label{eq:upper-rbsde}
Y_t
=
\eta
+
\int_t^T g(s,Y_s,Z_s)\,ds
-
\int_t^T Z_s\,dW_s
-
(K_T-K_t),
\qquad 0\le t\le T,
\end{equation}
where \(K\) is an adapted nondecreasing process with \(K_0=0\), such that
\begin{equation}
\label{eq:upper-obstacle}
Y_t\le L_t,
\qquad 0\le t\le T,
\end{equation}
and
\begin{equation}
\label{eq:upper-skorokhod}
\int_0^T (Y_t-L_t)\,dK_t=0.
\end{equation}
The last condition is the Skorokhod minimality condition; equivalently,
\(K\) can increase only on the contact set
\[
\{t\in[0,T]:Y_t=L_t\}.
\]
\end{definition}

\section{Optimal stopping problem for risk measures and obstacle}

Let $g^*$ denote the convex conjugate associated with the dual variables
corresponding to model and discount ambiguity. Following the dual representation
of cash-subadditive risk measures \cite{ELK-rav}, we consider admissible pairs
$(\alpha,\beta)$ in the class
\begin{equation} \label{eq: A}
\mathcal A
=
\left\{
(\alpha,\beta):
\alpha\in \mathrm{BMO},\quad
0\le \beta_t\le C,\quad
g^*(t,\alpha_t,\beta_t)<+\infty
\right\}.
\end{equation}
For each admissible $\alpha$, the stochastic exponential
\[
\mathcal E\left(-\int_0^\cdot \alpha_s\,dW_s\right)_T
=
\exp\left\{
-\int_0^T \alpha_s\,dW_s
-\frac12\int_0^T |\alpha_s|^2\,ds
\right\}
\]
is assumed to be a uniformly integrable martingale and defines an
equivalent probability measure $\mathbb Q^\alpha$ by
\[
\frac{d\mathbb Q^\alpha}{d\mathbb P}
=
\mathcal E\left(-\int_0^\cdot \alpha_s\,dW_s\right)_T.
\]
The process $\beta$ represents ambiguity in the stochastic discount rate. We also
assume that the discounted penalty term is integrable for every
$(\alpha,\beta)\in\mathcal A$, and that $\mathcal A$ is stable under stopping
and pasting: if two admissible pairs are used before and after a stopping time,
then the pasted pair is again admissible. Finally, the dual representation
below is assumed to be time consistent, equivalently generated by the BSDE
driver in \eqref{eq: risk measure-bsde}. These stability and time consistency
conditions are the hypotheses needed for the nonlinear Snell envelope and
reflected BSDE arguments.

For $t\le \tau$, the paired-ambiguity risk evaluation is
\begin{equation}\label{eq:section3-risk-eval}
\rho_{t,\tau}(\xi_\tau)
=
\sup_{(\alpha,\beta)\in\mathcal A}
\mathbb E^{\mathbb Q^\alpha}
\left[
 e^{-\int_t^\tau \beta_s\,ds}(-\xi_\tau)
-
\int_t^\tau e^{-\int_t^s\beta_u\,du}
g^*(s,\alpha_s,\beta_s)\,ds
\;\middle|\;\mathcal F_t
\right],
\end{equation}
where 
\begin{equation} \label{eq: g-star}
g^*(s,\alpha,\beta)= \sup_{(y,z) \in \mathbb R \times \mathbb R^d} \{-\beta y - \alpha \cdot z - g(s,y,z) \}.
\end{equation}
See Theorem 7.5 of
\cite{ELK-rav}.


Under the standing assumptions, we now focus on the following optimal stopping problem
\begin{equation} \label{eq: inf-rho}
\varphi_t(\xi):=  \essinf_{\tau\in\mathcal T_t}\rho_{t,\tau}(\xi_\tau).
\end{equation}

The following result is the upper obstacle, risk measure counterpart of
Proposition 3.1 in \cite{kobyl-etal}. After the corresponding sign
transformation, the essential supremum characterization therein becomes
the essential infimum characterization below.

\begin{theorem}[Reflected BSDE representation]
\label{thm:rbsde-representation}
Under the standing assumptions \ref{sec: assumpt}, the value process $(\varphi_t(\xi))_t$ in \eqref{eq: inf-rho}
is the first component of the unique upper reflected BSDE
\eqref{eq:upper-rbsde}--\eqref{eq:upper-skorokhod}, with
\[
\eta=-\xi_T,
\qquad
L_t=-\xi_t.
\]
In particular,
\[
\varphi_t(\xi)=Y_t.
\]

Moreover, under the regularity assumptions ensuring attainment of the
optimal stopping problem, the first contact time
\[
\tau_t^*
=
\inf\{s\ge t:Y_s=-\xi_s\}
\]
is optimal.
\end{theorem}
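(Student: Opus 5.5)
The proof goes through the BSDE-generated evaluation. I would first identify, for each stopping time $\tau\in\mathcal T_t$, $\rho_{t,\tau}(\xi_\tau)$ with $\mathcal Y^\tau_t$, where $(\mathcal Y^\tau,\mathcal Z^\tau)$ solves the BSDE with driver $g$ on $[t,\tau]$ and terminal value $-\xi_\tau$:
\[
\mathcal Y^\tau_s=-\xi_\tau+\int_s^\tau g(u,\mathcal Y^\tau_u,\mathcal Z^\tau_u)\,du-\int_s^\tau \mathcal Z^\tau_u\,dW_u .
\]
This is the time-consistency hypothesis: the dual formula \eqref{eq:section3-risk-eval} is generated by the driver in \eqref{eq: risk measure-bsde}, as in Theorem 7.5 of \cite{ELK-rav}, now with a stopping-time horizon. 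Concretely, for fixed $(\alpha,\beta)\in\mathcal A$, the Fenchel inequality $g(s,y,z)\ge -\beta_s y-\alpha_s\cdot z-g^*(s,\alpha_s,\beta_s)$ lets one compare $\mathcal Y^\tau$ with the linear BSDE with driver $-\beta y-\alpha\cdot z-g^*$. That linear BSDE has the explicit solution
\[
\mathbb E^{\mathbb Q^\alpha}\Big[e^{-\int_t^\tau\beta}(-\xi_\tau)-\int_t^\tau e^{-\int_t^s\beta}g^*\,ds\,\Big|\,\mathcal F_t\Big].
\]
This gives ``$\ge$''. For equality, I would choose a measurable selection $(\alpha,\beta)$ attaining the supremum in \eqref{eq: g-star} at $(\mathcal Y^\tau,\mathcal Z^\tau)$; one may take $\alpha=-\partial_z g$, $\beta=-\partial_y g$ where differentiable. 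The bound $|\partial_z g|\le C(1+|z|)$, together with the BMO property of $\mathcal Z^\tau$ for bounded terminal data, places this $\alpha$ in BMO.

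Second, let $(Y,Z,K)$ be the unique solution of the upper reflected BSDE with $\eta=-\xi_T$ and $L=-\xi$, which exists by Corollary 1 of \cite{kobyl-etal} after the sign change $\tilde Y=-Y$, $\tilde g(t,y,z)=-g(t,-y,-z)$. This change turns it into a lower reflected BSDE with obstacle $\xi$; the standing assumptions are preserved up to the one-sided condition on $\partial_y$, which I would check explicitly.

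For the inequality $Y_t\le \mathcal Y^\tau_t$ for every $\tau$: on $[t,\tau]$, $Y$ solves a BSDE with driver $g$, terminal value $Y_\tau\le -\xi_\tau$, and the extra term $-(K_\tau-K_s)$, which only pushes the solution down. The comparison theorem for quadratic BSDEs (Theorem 2.6 of \cite{kobyl-bsde}) then yields $Y_t\le\mathcal Y^\tau_t=\rho_{t,\tau}(\xi_\tau)$. Taking the essential infimum gives $Y_t\le\varphi_t(\xi)$.

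For the reverse inequality, take $\tau^*_t=\inf\{s\ge t: Y_s=-\xi_s\}\wedge T$. By continuity of $Y$ and $\xi$, we have $Y_{\tau^*_t}=-\xi_{\tau^*_t}$; the case $\tau^*_t=T$ is covered by $Y_T=-\xi_T$. Since $Y<L$ on $[t,\tau^*_t)$, the Skorokhod condition \eqref{eq:upper-skorokhod} forces $K_{\tau^*_t}=K_t$. Hence $(Y,Z)$ solves on $[t,\tau^*_t]$ exactly the BSDE defining $\mathcal Y^{\tau^*_t}$, and uniqueness gives $Y_t=\rho_{t,\tau^*_t}(\xi_{\tau^*_t})\ge\varphi_t(\xi)$. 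This proves $\varphi_t(\xi)=Y_t$ and the optimality of $\tau^*_t$ at once. Continuity of $\xi$ is the regularity ensuring attainment; without it I would fall back on $\varepsilon$-optimal times $\inf\{s: Y_s\ge -\xi_s-\varepsilon\}$.

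I expect the main obstacle to be the first step, the identification of the dual sup over the general set $\mathcal A$ with the BSDE value at stopping-time horizons. The comparison and Skorokhod arguments are routine, but attaining the supremum in the dual requires the maximizer to be admissible. Specifically, $\beta=-\partial_y g$ must lie in $[0,C]$, which needs $g$ decreasing and Lipschitz in $y$ beyond the stated one-sided bound. I would handle this by invoking the assumed time consistency and BSDE generation of $\rho$ directly, together with the stability of $\mathcal A$ under pasting, rather than rederiving it.
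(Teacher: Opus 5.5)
Your argument is correct and is essentially the paper's route. The paper gives no separate proof: it invokes Proposition 3.1 of Kobylanski et al.\ after the sign change, and takes the identification $\rho_{t,\tau}(\xi_\tau)=\mathcal Y^\tau_t$ from the standing time-consistency assumption, exactly as you ultimately do; your comparison argument followed by the Skorokhod condition at the first contact time is that proposition's proof transposed to the upper obstacle, and it also shows that continuity of $\xi$ (already assumed) suffices for optimality of $\tau^*_t$, which the paper leaves under an unspecified regularity hypothesis.
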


Note that the stopping payoff is received as a financial payoff $\xi_t$, while the risk
process is written with the sign convention of a liability. Hence the immediate
stopping risk is $-\xi_t$, and the reflected BSDE has the upper obstacle $(-\xi_t)$.
\medskip

In the following result, we prove the properties satisfied by the optimal stopping value $\varphi$.

For \(m_t\in L^\infty(\mathcal F_t)\), we use the notation
\(\xi+m_t\) for the shifted payoff process defined on \([t,T]\) by
\[
(\xi+m_t)_s:=\xi_s+m_t,
\qquad s\in[t,T].
\]
\smallskip

\begin{theorem}[Dynamic risk measure properties and geometry of the stopping operator]\label{prop: properties} The stopping operator $\varphi$ defined in \eqref{eq: inf-rho} inherits the main axioms of dynamic risk measure. More precisely, it satisfies the following properties.

\noindent a) (decreasing monotonicity) For any $t \in [0,T]$, $\varphi_t$ is monotone decreasing, i.e., if $\xi_s \geq \eta_s$ for any $s \in [0,T]$ (with $\eta$ being a bounded adapted obstacle), then $\varphi_t(\xi) \leq \varphi_t(\eta)$.

\noindent b) (cash-subadditivity) If $g$ is decreasing in $y$, then $\varphi_t$ is cash-subadditive for any $t \in [0,T]$, i.e., $\varphi_t(\xi + m_t) \geq \varphi_t(\xi)-m_t$ for any obstacle $\xi$ and $m_t \in L^\infty_+(\F_t)$.

\noindent c) (cash-additivity) If $g$ is independent of $y$, then $\varphi_t$ is cash-additive for any $t \in [0,T]$, i.e., $\varphi_t(\xi + m_t) = \varphi_t(\xi)-m_t$ for any obstacle $\xi$ and $m_t \in L^\infty(\F_t)$.

\noindent d) (quasi-concavity) Assume that any $\rho_{t, \tau}$ satisfies the following continuity property: $$\rho_{t,\tau}(\esssup\{X;Y\}) = \essinf\{ \rho_{t,\tau}(X); \rho_{t,\tau}(Y) \}.$$ 

If $\varphi_t(\xi) \geq c_t$ and $\varphi_t(\eta) \geq c_t$ for bounded obstacles $\xi, \eta$ and for some $t \in [0,T]$ and $c_t \in L^{\infty}(\F_t)$, then $\varphi_t(\lambda \xi + (1-\lambda) \eta) \geq c_t$ for any $\lambda \in [0,1]$.

\noindent e) If $g$ is affine in $y,z$, i.e. $g(t,y,z)=ay+b \cdot z+c$ for some $a,c \in \mathbb R$ and $b \in \mathbb R^d$, then $\varphi_t$ is concave for any $t \in [0,T]$, i.e., $\varphi_t(\lambda \xi + (1-\lambda) \eta) \geq \lambda \varphi_t(\xi)+ (1-\lambda) \varphi_t(\eta)$ for all obstacles $\xi, \eta$ and $\lambda \in [0,1]$.

\noindent f) $\varphi_t$ is not convex in general. For instance, convexity fails when $g(t,y,0)=0$ for any $y \in \mathbb{R}$ or when $g$ is independent of $y$ and $g(t,0)=0$.
\end{theorem}

The previous result highlights that convexity of $\varphi$ fails in general and that the sufficient conditions (e.g., the continuity assumption in d) and the assumption on $g$ in e)) for $\varphi$ to be (quasi-)concave are quite strong.

\begin{proof}

\noindent a) Let $\xi, \eta$ be adapted and bounded obstacles with $\xi_s \geq \eta_s$ for any $s \in [0,T]$.

By the properties of BSDEs and of the associated risk measures (see, e.g., \cite{barrieu-elk}, \cite{ELKPQ}, \cite{rg}), $\rho^g_{t,\tau}(\xi_\tau) \leq \rho^g_{t, \tau} (\eta_\tau)$ for any $t \leq \tau$.
It then follows that $\varphi_t(\xi) \leq \varphi_t(\eta)$ for all $t \in [0,T]$.

\noindent b) By El Karoui and Ravanelli \cite{ELK-rav}, decreasing monotonicity of $g$ in $y$ implies cash-subadditivity of any $\rho_{t,\tau}$. It then follows that for any obstacle $\xi$ and any $m_t \in L^\infty_+(\F_t)$
\begin{align*}
\varphi_t(\xi + m_t)&= \essinf_{\tau\in\mathcal T_t}\rho_{t,\tau}(\xi + m_t) \\
&\geq \essinf_{\tau\in\mathcal T_t} \{ \rho_{t,\tau}(\xi ) - m_t \} \\
&= \varphi_t(\xi) - m_t.
\end{align*}

\noindent c) By Briand et al. \cite{BCHMP}, independence of $g$ in $y$ implies cash-additivity of any $\rho_{t,\tau}$. The thesis then follows by proceeding as in item b).

\noindent d) Clearly, $\varphi_t(\xi) \geq c_t$ implies that $\rho_{t,\tau}(\xi_\tau)\geq c_t$ for any $\tau \in \mathcal T_t$; similarly for $\varphi_t(\eta)$.

For any $\lambda \in [0,1]$
\begin{align}
\varphi_t(\lambda \xi + (1-\lambda) \eta)&= \essinf_{\tau\in\mathcal T_t}\rho_{t,\tau}(\lambda \xi_\tau + (1-\lambda) \eta_\tau)  \notag \\
&\geq \essinf_{\tau\in\mathcal T_t} \rho_{t,\tau}(\esssup\{\xi_\tau; \eta_\tau \})  \notag \\
&= \essinf_{\tau\in\mathcal T_t} \essinf\{ \rho_{t,\tau}(\xi_\tau); \rho_{t,\tau}(\eta_\tau) \} \label{eq: cont} \\
&\geq c_t, \notag
\end{align}
where \eqref{eq: cont} is due to the continuity assumption of $\rho_{t,\tau}$.

\noindent e) If $g$ is affine, then the associated $\rho_{t,s}$ is both convex and concave (see, e.g., \cite{barrieu-elk}, \cite{ELKPQ}, \cite{rg}).
It then follows that, for any $\lambda \in [0,1]$ and all obstacles $\xi, \eta$,
\begin{align*}
\varphi_t(\lambda \xi + (1-\lambda) \eta)&= \essinf_{\tau\in\mathcal T_t}\rho_{t,\tau}(\lambda \xi_\tau + (1-\lambda) \eta_\tau)  \\
&\geq \essinf_{\tau\in\mathcal T_t} \{ \lambda \rho_{t,\tau}(\xi_\tau) + (1-\lambda) \rho_{t, \tau} (\eta_\tau) \} \\
&\geq \lambda \essinf_{\tau\in\mathcal T_t}   \rho_{t,\tau}(\xi_\tau) + (1-\lambda) \essinf_{\tau\in\mathcal T_t}\rho_{t, \tau} (\eta_\tau) \\
&= \lambda \varphi_t( \xi) + (1-\lambda) \varphi_t(\eta).
\end{align*}

\noindent f) Assume that $g(t,y,0)=0$ for any $y \in \mathbb{R}$ or that $g$ is independent of $y$ and $g(t,0)=0$. Then, by \cite{peng} (see also Prop. 2.1 of \cite{CHMP}), the associated $\rho_{t,s}$ satisfies constancy, i.e. $\rho_{t,s}(c_t)=-c_t$ for any $c_t \in L^{\infty}(\F_t)$.

For deterministic obstacles $\xi, \eta$, then,
\begin{equation*}
\varphi_t(\xi)=\essinf_{\tau\in\mathcal T_t}  \rho_{t,\tau}(\xi_\tau)= \essinf_{s \in [t,T]} \rho_{t,s}(\xi_s)= \essinf_{s \in [t,T]} \, (-\xi_s),
\end{equation*}
where the last equality is due to the constancy property. A similar expression holds for $\varphi_t(\eta)$.

Let $t \in [0,T)$ and take $\xi_s=s$, $\eta_s=-2s$ for any $s \in [t,T]$ and $\lambda= \frac 12$. It then follows immediately that
\begin{equation*}
\varphi_t(\xi)= \essinf_{s \in [t,T]} \, (-\xi_s)=-T; \quad \varphi_t(\eta)= \essinf_{s \in [t,T]} \, (-\eta_s)=2t
\end{equation*}
and
\begin{equation*}
\varphi_t(\lambda \xi + (1-\lambda) \eta)  =\essinf_{s \in [t,T]} \, \left(- \frac 12 \xi_s - \frac 12 \eta_s \right) =\frac{t}{2} > \frac 12 \varphi_t( \xi) + \frac 12 \varphi_t(\eta)= -\frac{T}{2}+t,
\end{equation*}
where the strict inequality holds for any $t <T$. Convexity then fails in general.
\end{proof}

Note that the previous proof is based on the relation between optimal stopping problems, reflected BSDEs, and BSDEs. 

It is worth underlining that, by its definition and by its properties, $\varphi$ can be seen more as a dynamic risk measure for processes than as one for random variables.

In the following, we will focus our attention on examples of drivers of the following form:
\begin{equation*}
g(t,y,z)= \sup_{(\delta, \gamma) \in \C}  \psi (\delta, \gamma, y,z),  
\end{equation*}
for a given set $\C$ that can be interpreted as the set over which there is paired-ambiguity and for $\psi(\delta,\gamma, y,z)$ convex in $(y,z)$ and decreasing in $y$. Consequently, the associated driver $g$ satisfies the same properties.
\bigskip

Some special cases of drivers could be:
\begin{itemize}
    \item $g(t,y,z)= \sup_{(\delta, \gamma) \in \C}  \{-\delta y + \frac{\gamma}{2} |z|^2 \}$ with $\delta >0$ on $\C$ that can be seen as a paired-ambiguity on aversion in the entropic-type term and on discounting;
\item $g(t,y,z)= \sup_{\delta \in \C}  \{-\delta y \}$ with $\delta >0$ on $\C$ (ambiguity on discounting);
\item $g(t,y,z)= \sup_{\gamma \in \C}  \{\gamma z \}$ (ambiguity on the probability measure/generalized scenario $Q$ over which expectation is taken);
\item $g(t,y,z)= \sup_{(\delta, \gamma) \in \C} \{ -\delta y + \gamma z \}$ with $\delta >0$ on $\C$ (linear pair-ambiguity);
\item $g(t,y,z)= \sup_{\delta}  \{-\delta y + \bar{\gamma}z\}$ with $\delta >0$ on $\C$;
\item $g(t,y,z)= \sup_{\gamma}  \{ \frac{\gamma}{2} |z|^2 \}$ (ambiguity on the aversion parameter in the entropic case).
\end{itemize}

\section{Entropic Risk Measure}\label{sec:entropic}

The entropic case is the canonical reflected quadratic BSDE example for this
stopping problem. It should be treated separately from the Lipschitz theory: existence, uniqueness and comparison rely
on quadratic BSDE estimates, BMO martingales and exponential integrability
conditions as in Kobylanski type results \cite{kobyl-bsde,kobyl-etal}. In the
reflected case, these conditions are combined with the upper obstacle estimates
used in the Quenez--Sulem representation \cite{quenez-sulem}.

\subsection{Properties of the entropic reflected quadratic BSDE}

In the stopping problem, the relevant object is not the fixed terminal entropic
BSDE alone, but the upper reflected quadratic BSDE with obstacle $-\xi$:
\begin{equation}\label{eq:entropic-rbsde-properties}
Y_t
=
-\xi_T
+
\int_t^T \frac{\gamma}{2}|Z_s|^2\,ds
-
\int_t^T Z_s\,dW_s
-
(K_T-K_t),
\qquad 0\le t\le T,
\end{equation}
with
\begin{equation}\label{eq:entropic-rbsde-properties-obstacle}
Y_t\le -\xi_t,
\qquad
\int_0^T (Y_t+\xi_t)\,dK_t=0.
\end{equation}
Here $K$ is the reflection process. The ordinary entropic BSDE appears only as
the local equation on an interval where no reflection occurs, or as the
fixed stopping time evaluation used inside the nonlinear Snell envelope.

\subsection{Explicit entropic representation and reflected envelope}
We now exploit the exponential structure of the entropic driver to obtain
a more explicit representation of the stopping problem. The exponential
transformation converts the upper reflected quadratic BSDE into a classical
lower Snell envelope problem for the transformed payoff $e^{-\gamma\xi}$.
This yields both an explicit representation of the value process and a
direct characterization of the associated optimal stopping rule.
\begin{proposition}[Exponential transform for the entropic reflected quadratic BSDE]
Let $\gamma>0$, and assume that the payoff process $\xi$ is continuous and has
sufficient exponential moments so that the quadratic reflected BSDE below is
well posed in the sense of the quadratic reflected BSDE literature. Let
$g(z)=(\gamma/2)|z|^2$, and denote by
$\cE_{t,\tau}^g[\eta]$ the first component at time $t$ of the BSDE on
$[t,\tau]$ with driver $g$ and terminal condition $\eta$. In the entropic case,
\begin{equation}\label{eq:entropic-risk-eval}
\cE_{t,\tau}^g[-\xi_\tau]
=
\frac{1}{\gamma}
\log\mathbb{E}\left[e^{-\gamma\xi_\tau}\mid\mathcal{F}_t\right].
\end{equation}
Then the value process
\begin{equation}\label{eq:entropic-stopping-value}
V_t
:=
\essinf_{\tau\in\mathcal T_t}
\cE_{t,\tau}^g[-\xi_\tau]
\end{equation}
is characterized by the upper reflected quadratic BSDE
\begin{equation}\label{eq:entropic-reflected-bsde}
V_t
=
-\xi_T
+
\int_t^T
\frac{\gamma}{2}|Z_s|^2\,ds
-
\int_t^T Z_s\,dW_s
-
(K_T-K_t),
\end{equation}
with obstacle and Skorokhod conditions
\begin{equation}\label{eq:entropic-reflected-conditions}
V_t\le -\xi_t,
\qquad
\int_0^T (V_t+\xi_t)\,dK_t=0.
\end{equation}
Moreover, when the obstacle regularity ensures optimality of the first hitting
time, an optimal stopping time is
\[
\tau_t^*
=
\inf\{s\ge t:\;V_s=-\xi_s\}.
\]
Equivalently, this BSDE nonlinear expectation value has the explicit
reflected envelope representation
\begin{equation}\label{eq:reflected-entropic-envelope}
V_t
=
\essinf_{\tau\in\mathcal{T}_t}
\cE_{t,\tau}^g[-\xi_\tau]
=
\frac{1}{\gamma}
\log\left(
\essinf_{\tau\in\mathcal{T}_t}
\mathbb{E}\left[e^{-\gamma\xi_\tau}\mid\mathcal{F}_t\right]
\right).
\end{equation}
\end{proposition}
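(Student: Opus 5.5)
The plan is to prove the three claims in order: the fixed-horizon formula \eqref{eq:entropic-risk-eval}, then the reflected BSDE characterization, then the explicit envelope \eqref{eq:reflected-entropic-envelope}. The tool throughout is the exponential (Cole--Hopf) transform $U_t:=e^{\gamma Y_t}$, which turns the quadratic driver into a linear one.

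\textbf{Fixed horizon formula.} Fix $\tau\in\mathcal T_t$ and let $(Y,Z)$ solve the BSDE on $[t,\tau]$ with driver $\frac{\gamma}{2}|z|^2$ and terminal value $-\xi_\tau$. By It\^o's formula,
\[
dU_s=\gamma U_s\bigl(dY_s+\tfrac{\gamma}{2}|Z_s|^2ds\bigr)=\gamma U_sZ_s\,dW_s ,
\]
because $dY_s=-\frac{\gamma}{2}|Z_s|^2ds+Z_s\,dW_s$. So $U$ is a local martingale. Boundedness of $\xi$ (or the assumed exponential moments) makes $Y$ bounded and $Z\cdot W$ a BMO martingale, so $U$ is a true martingale. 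Hence $e^{\gamma Y_t}=\mathbb E[e^{-\gamma\xi_\tau}\mid\mathcal F_t]$, which is \eqref{eq:entropic-risk-eval}.

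\textbf{Reflected BSDE characterization.} The driver $\frac{\gamma}{2}|z|^2$ is independent of $y$, continuous, has quadratic growth and has $|\partial_z g|\le\gamma|z|$. It therefore satisfies the standing assumptions of Section~\ref{sec: assumpt}. Theorem~\ref{thm:rbsde-representation}, applied with $\eta=-\xi_T$ and $L=-\xi$, gives directly that $V$ in \eqref{eq:entropic-stopping-value} is the first component of \eqref{eq:entropic-reflected-bsde}--\eqref{eq:entropic-reflected-conditions}, and that the first contact time $\tau^*_t$ is optimal under the stated regularity.

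\textbf{Explicit envelope.} Since $x\mapsto\frac1\gamma\log x$ is continuous and strictly increasing, it commutes with the essential infimum over the family $\{\mathbb E[e^{-\gamma\xi_\tau}\mid\mathcal F_t]\}_{\tau}$. This gives \eqref{eq:reflected-entropic-envelope} at once from the first step.

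As a consistency check, which I would also include, apply the transform to the reflected solution. Set $U_t=e^{\gamma V_t}$; then
\[
dU_s=\gamma U_s Z_s\,dW_s+\gamma U_s\,dK_s ,
\]
so $U$ is a submartingale lying below the obstacle $e^{-\gamma\xi}$. Its increasing part is $\int\gamma U\,dK$, which charges only $\{U=e^{-\gamma\xi}\}$. This is precisely the lower Snell envelope characterization of $\essinf_\tau\mathbb E[e^{-\gamma\xi_\tau}\mid\mathcal F_t]$, i.e.\ the largest submartingale below the obstacle. The optimality of $\tau^*_t$ then also follows from classical Snell envelope theory for the continuous obstacle $e^{-\gamma\xi}$.

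\textbf{Main obstacle.} The delicate step is integrability: I need $U$ to be a true martingale, and $\int U Z\,dW$ to be a true martingale in the reflected case, so that uniqueness transfers between the two formulations. I would handle this by using the boundedness of $\xi$ to get $V$ bounded, which is Kobylanski's a priori bound for the reflected quadratic BSDE. This makes $U$ bounded and bounded away from zero and $Z\in\mathrm{BMO}$, after which the stochastic integrals are genuine martingales by the energy inequality.
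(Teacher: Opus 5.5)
Your proof is correct. The Cole--Hopf computation for fixed $\tau$, and passing the increasing map $x\mapsto\frac1\gamma\log x$ through the essential infimum, are exactly the paper's first two steps.

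The difference is in how you obtain the reflected characterization. You specialize Theorem~\ref{thm:rbsde-representation}, i.e.\ the quadratic reflected BSDE theory of Kobylanski et al. The paper states that theorem without proof. Applying it to $V$ also requires the identification $\rho_{t,\tau}(\xi_\tau)=\cE^g_{t,\tau}[-\xi_\tau]$, which is granted by the standing time-consistency assumption.

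The paper instead makes the exponential transform the main argument and runs it in both directions. Your ``consistency check'' is its forward direction: $R=e^{\gamma V}$ satisfies $dR=\gamma RZ\,dW+dA$, where $A=\int\gamma R\,dK$ is carried by $\{R=e^{-\gamma\xi}\}$, so $R$ is the lower Snell envelope. For the converse, the paper takes the Doob--Meyer decomposition $dR=\widetilde Z\,dW+dA$ of the classical lower Snell envelope of $e^{-\gamma\xi}$. It then sets $V=\frac1\gamma\log R$, $Z=\widetilde Z/(\gamma R)$ and $K=\int(\gamma R)^{-1}dA$. This produces a solution of the reflected quadratic equation from linear Snell envelope theory alone.

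Your check is stronger than you present it. Write it out as the usual verification argument. The submartingale property and $e^{\gamma V}\le e^{-\gamma\xi}$ give $e^{\gamma V_t}\le\mathbb E[e^{-\gamma\xi_\tau}\mid\mathcal F_t]$ for every $\tau$. Since $K$ is flat on $[t,\tau^*_t]$, equality holds at $\tau^*_t$. This already identifies every bounded solution with the envelope and proves optimality of $\tau^*_t$. Combined with the paper's converse construction, it would make the appeal to the general theorem unnecessary. Your route is shorter; the paper's is self-contained for the entropic case.

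One caveat: in both arguments, the boundedness of $V$ and the BMO property of $Z$ come from the standing assumption that $\xi$ is bounded. Contrary to your parenthetical remark, ``sufficient exponential moments'' alone would not make $Y$ bounded.
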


\begin{proof}
We first recall the non-reflected entropic calculation, since it is the basic
exponential transform behind the reflected formula. Fix a stopping time
$\tau\in\mathcal T_t$ and a terminal payoff $\xi_\tau$ with the required
exponential integrability. The quadratic BSDE on $[t,\tau]$ with terminal value
$-\xi_\tau$ is
\begin{equation}\label{eq:entropic-stopped-bsde}
Y_s^\tau
=
-\xi_\tau
+
\int_s^\tau \frac{\gamma}{2}|Z_r^\tau|^2\,dr
-
\int_s^\tau Z_r^\tau\,dW_r,
\qquad t\le s\le\tau .
\end{equation}
Set $M_s^\tau:=\exp(\gamma Y_s^\tau)$. By It\^o's formula,
\begin{align*}
dM_s^\tau
&=
\gamma M_s^\tau\,dY_s^\tau
+\frac{\gamma^2}{2}M_s^\tau |Z_s^\tau|^2\,ds  \\
&=
\gamma M_s^\tau
\left(-\frac{\gamma}{2}|Z_s^\tau|^2\,ds+Z_s^\tau\,dW_s\right)
+\frac{\gamma^2}{2}M_s^\tau |Z_s^\tau|^2\,ds  \\
&=
\gamma M_s^\tau Z_s^\tau\,dW_s .
\end{align*}
Thus $M^\tau$ is a martingale under the stated exponential moment assumptions,
and its terminal value is $M_\tau^\tau=e^{-\gamma\xi_\tau}$. Consequently
\begin{equation}\label{eq:entropic-explicit}
Y_t^\tau
=
\frac{1}{\gamma}
\log\mathbb{E}\left[e^{-\gamma\xi_\tau}\mid\mathcal F_t\right]
=
\cE_{t,\tau}^g[-\xi_\tau].
\end{equation}

Taking the essential infimum over all admissible stopping times gives
\begin{equation*}
V_t
=
\essinf_{\tau\in\mathcal T_t}Y_t^\tau
=
\essinf_{\tau\in\mathcal T_t}
\cE_{t,\tau}^g[-\xi_\tau]
=
\essinf_{\tau\in\mathcal T_t}
\frac{1}{\gamma}
\log\mathbb{E}\left[e^{-\gamma\xi_\tau}\mid\mathcal F_t\right].
\end{equation*}
Since $x\mapsto (1/\gamma)\log x$ is increasing, this is equivalent to
\eqref{eq:reflected-entropic-envelope}. If
\[
R_t:=\essinf_{\tau\in\mathcal T_t}
\mathbb{E}\left[e^{-\gamma\xi_\tau}\mid\mathcal F_t\right],
\]
then $R$ is the lower Snell envelope of the transformed payoff
$e^{-\gamma\xi}$. In particular, $R_t\le e^{-\gamma\xi_t}$, and applying the
increasing map $(1/\gamma)\log(\cdot)$ gives the upper obstacle condition
$V_t\le -\xi_t$.

We now identify explicitly how the reflection term transforms under the
exponential change of variables. Since
\[
R_t=e^{\gamma V_t},
\]
and the reflected quadratic BSDE can be written in differential form as
\[
dV_t
=
-\frac{\gamma}{2}|Z_t|^2\,dt
+
Z_t\,dW_t
+
dK_t,
\]
It\^o's formula gives
\begin{align*}
dR_t
&=
\gamma R_t\,dV_t
+
\frac{\gamma^2}{2}R_t|Z_t|^2\,dt\\
&=
\gamma R_t
\left(
-\frac{\gamma}{2}|Z_t|^2\,dt
+
Z_t\,dW_t
+
dK_t
\right)
+
\frac{\gamma^2}{2}R_t|Z_t|^2\,dt\\
&=
\gamma R_t Z_t\,dW_t
+
\gamma R_t\,dK_t.
\end{align*}
Thus, setting
\[
\widetilde Z_t:=\gamma R_tZ_t
\]
and
\[
A_t:=\int_0^t \gamma R_s\,dK_s,
\]
we obtain
\[
dR_t=\widetilde Z_t\,dW_t+dA_t.
\]
Since \(K\) is nondecreasing and \(R_t>0\), the process \(A\) is also
nondecreasing.

Moreover, the upper obstacle condition \(V_t\le-\xi_t\) is equivalent,
under the increasing exponential transformation, to
\[
R_t\le e^{-\gamma\xi_t}.
\]
The Skorokhod condition is preserved by the transformation. Indeed,
\(dK_t\) is carried by the contact set
\[
\{V_t=-\xi_t\},
\]
and therefore \(dA_t=\gamma R_t\,dK_t\) is carried by
\[
\{R_t=e^{-\gamma\xi_t}\}.
\]
Consequently,
\[
\int_0^T
\bigl(R_t-e^{-\gamma\xi_t}\bigr)\,dA_t
=
0.
\]
Hence \(R\) is precisely the lower Snell envelope of the transformed
payoff \(e^{-\gamma\xi}\), with Doob--Meyer decomposition
\[
dR_t=\widetilde Z_t\,dW_t+dA_t.
\]
Conversely, starting from the lower Snell envelope \(R\), define
\[
V_t:=\frac{1}{\gamma}\log R_t,
\qquad
Z_t:=\frac{\widetilde Z_t}{\gamma R_t},
\qquad
K_t:=\int_0^t\frac{1}{\gamma R_s}\,dA_s.
\]
Applying It\^o's formula to \(V_t=(1/\gamma)\log R_t\) yields
\[
dV_t
=
-\frac{\gamma}{2}|Z_t|^2\,dt
+
Z_t\,dW_t
+
dK_t.
\]
Together with
\[
V_t\le-\xi_t,
\qquad
\int_0^T(V_t+\xi_t)\,dK_t=0,
\]
this recovers the upper reflected quadratic BSDE
\[
V_t
=
-\xi_T
+
\int_t^T\frac{\gamma}{2}|Z_s|^2\,ds
-
\int_t^T Z_s\,dW_s
-
(K_T-K_t).
\]
Conversely, the comparison and minimality principle for reflected BSDEs
identify the first component of any solution with the essential infimum of the
entropic stopping evaluations. The first contact time
$\tau_t^*=\inf\{s\ge t:V_s=-\xi_s\}$ is optimal whenever the standard obstacle
regularity assumptions ensure that this hitting time is admissible and the
Snell envelope is attained there.
\end{proof}

The proposition above contains the detailed exponential transform proof. The
key point is that reflection does not destroy the entropic transform; it changes
the single conditional expectation in the non-reflected formula
\eqref{eq:entropic-explicit} into the lower Snell envelope appearing in
\eqref{eq:reflected-entropic-envelope}. Thus the remaining object is classical:
the optimal stopping envelope of the transformed payoff $e^{-\gamma\xi}$.

The proposition above reduces the entropic stopping problem to a classical
optimal stopping problem for the positive transformed payoff
\[
    U_t:=e^{-\gamma\xi_t}.
\]
In general, the lower Snell envelope of $U$ is the explicit object; a closed
formula requires additional structure on $\xi$. The following cases are useful
benchmarks for checking a numerical solver.

\begin{proposition}[Explicit stopping solutions for entropic risk]
\label{prop:explicit-entropic-stopping}
Let
\[
U_t:=e^{-\gamma\xi_t},
\qquad
R_t:=\operatorname*{ess\,inf}_{\tau\in\mathcal T_t}
\mathbb E[U_\tau\mid\mathcal F_t],
\qquad
V_t:=\frac{1}{\gamma}\log R_t .
\]
Then the following cases admit explicit optimal stopping rules.

\begin{enumerate}
\item[(i)]
If \(U\) is a submartingale, then
\[
R_t=U_t=e^{-\gamma\xi_t},
\qquad
V_t=-\xi_t,
\]
and immediate stopping
\[
\tau_t^*=t
\]
is optimal.

\item[(ii)]
If \(U\) is a supermartingale, then
\[
R_t
=
\mathbb E[U_T\mid\mathcal F_t]
=
\mathbb E[e^{-\gamma\xi_T}\mid\mathcal F_t],
\]
and therefore
\[
V_t
=
\frac{1}{\gamma}
\log
\mathbb E[e^{-\gamma\xi_T}\mid\mathcal F_t].
\]
In this case terminal stopping
\[
\tau_t^*=T
\]
is optimal.

\item[(iii)]
If \(\xi=(\xi_s)_{0\le s\le T}\) is deterministic and continuous, then
\[
R_t
=
\min_{s\in[t,T]}e^{-\gamma\xi_s}
=
e^{-\gamma\max_{s\in[t,T]}\xi_s},
\]
and hence
\[
V_t=-\max_{s\in[t,T]}\xi_s.
\]
An optimal stopping time is the first future maximizer
\[
\tau_t^*
=
\inf\left\{
s\in[t,T]:
\xi_s=\max_{u\in[t,T]}\xi_u
\right\}.
\]

\item[(iv)]
Suppose that
\[
\xi_t=x_0+\mu t+\sigma W_t,
\qquad \sigma\neq0.
\]
Then
\[
U_t
=
e^{-\gamma x_0}
\exp\bigl(-\gamma\mu t-\gamma\sigma W_t\bigr),
\]
and
\[
dU_t
=
U_t
\left[
\left(
-\gamma\mu+\frac{\gamma^2\sigma^2}{2}
\right)dt
-\gamma\sigma\,dW_t
\right].
\]
Consequently, if
\[
\mu<\frac{\gamma\sigma^2}{2},
\]
then \(U\) is a submartingale, immediate stopping is optimal, and
\[
R_t=e^{-\gamma\xi_t},
\qquad
V_t=-\xi_t.
\]

If
\[
\mu>\frac{\gamma\sigma^2}{2},
\]
then \(U\) is a supermartingale, terminal stopping is optimal, and
\[
R_t
=
\exp\left(
-\gamma\xi_t
+
\left[
-\gamma\mu+\frac{\gamma^2\sigma^2}{2}
\right](T-t)
\right),
\]
so that
\[
V_t
=
-\xi_t
+
\left(
-\mu+\frac{\gamma\sigma^2}{2}
\right)(T-t).
\]

At the critical value
\[
\mu=\frac{\gamma\sigma^2}{2},
\]
the process \(U\) is a martingale. Hence every stopping time
\(\tau\in\mathcal T_t\) gives the same value,
\[
R_t=U_t,
\qquad
V_t=-\xi_t.
\]
\end{enumerate}
\end{proposition}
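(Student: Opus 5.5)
The plan is to work entirely at the level of the transformed payoff $U=e^{-\gamma\xi}$ and the lower Snell envelope $R$, using the reduction of the previous proposition: $V_t=\frac1\gamma\log R_t$, and a stopping time is optimal for $V$ if and only if it attains the essential infimum defining $R$. This follows since $x\mapsto\frac1\gamma\log x$ is strictly increasing. Each case then becomes a statement about optional stopping for $U$ on $[t,T]$. For this we only need $U$ to be of class (D) on $[t,T]$, which holds because $\xi$ is bounded in the standing assumptions. In case (iv), where $\xi$ is unbounded, we will use instead that $\sup_{s\le T}U_s$ is integrable, since Gaussian exponential moments are finite.

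For (i), if $U$ is a submartingale then optional sampling gives $\mathbb E[U_\tau\mid\mathcal F_t]\ge U_t$ for every $\tau\in\mathcal T_t$. The choice $\tau=t$ attains this lower bound, so $R_t=U_t$ and $V_t=-\xi_t$, and immediate stopping is optimal. For (ii), if $U$ is a supermartingale then $\mathbb E[U_\tau\mid\mathcal F_t]\ge\mathbb E[U_T\mid\mathcal F_t]$. This is obtained by conditioning on $\mathcal F_\tau$ and using $\mathbb E[U_T\mid\mathcal F_\tau]\le U_\tau$. The bound is attained at $\tau=T$, which gives the stated formula for $V_t$.

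For (iii), with $\xi$ deterministic, $\mathbb E[U_\tau\mid\mathcal F_t]=\mathbb E[e^{-\gamma\xi_\tau}\mid\mathcal F_t]$ is bounded below pointwise by $\min_{s\in[t,T]}e^{-\gamma\xi_s}=e^{-\gamma\max_{[t,T]}\xi}$. The maximum exists by continuity on a compact set. The deterministic time $\tau^*_t$, the first maximizer, is a stopping time and attains this bound. Its existence as a minimum of the closed set of maximizers again uses continuity.

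For (iv), I will first apply Itô's formula to $U_t=e^{-\gamma x_0}\exp(-\gamma\mu t-\gamma\sigma W_t)$, which gives the stated SDE. Writing $U_t=U_0e^{\kappa t}\mathcal E(-\gamma\sigma W)_t$ with $\kappa=-\gamma\mu+\gamma^2\sigma^2/2$, we get $U$ as $e^{\kappa t}$ times a true martingale; the stochastic exponential of a constant multiple of $W$ is a martingale. Hence $U$ is a submartingale when $\kappa\ge0$, i.e.\ $\mu\le\gamma\sigma^2/2$, and a supermartingale when $\kappa\le0$. Cases (i) and (ii) then apply. In the supermartingale case we compute explicitly
\[
\mathbb E[U_T\mid\mathcal F_t]=U_te^{\kappa(T-t)},
\]
which yields the stated $R_t$ and $V_t$. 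The critical case $\kappa=0$ is a martingale, so every $\tau$ gives $U_t$.

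The main obstacle is the integrability needed for optional sampling at arbitrary stopping times in (iv), since $\xi$ is unbounded there. I would handle it by bounding
\[
\sup_{s\le T}U_s\le C\exp\Bigl(\gamma|\sigma|\sup_{s\le T}|W_s|\Bigr),
\]
which is integrable by the reflection principle. This dominating variable makes the submartingale and supermartingale of class (D), so optional sampling is justified.
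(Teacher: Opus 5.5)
Your proposal is correct and follows the paper's proof essentially step by step. It uses optional sampling with $\tau=t$ (resp.\ $\tau=T$) attaining the bound in (i)--(ii), the pointwise lower bound $e^{-\gamma\max_{[t,T]}\xi}$ attained by the first maximizer in (iii), and It\^o's formula with reduction to (i)--(ii) in (iv); your factorization $U_t=U_0e^{\kappa t}\mathcal E(-\gamma\sigma W)_t$ usefully makes rigorous the paper's step from the sign of the drift to a genuine sub/supermartingale, while the class (D) and reflection-principle integrability remarks are harmless but not strictly needed, since optional sampling at $[t,T]$-valued stopping times holds for any continuous sub/supermartingale on a bounded interval.
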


\begin{proof}
For (i), if \(U\) is a submartingale, then for every
\(\tau\in\mathcal T_t\), the optional sampling theorem gives
\[
\mathbb E[U_\tau\mid\mathcal F_t]\ge U_t.
\]
Since \(\tau=t\) is admissible,
\[
R_t
=
\operatorname*{ess\,inf}_{\tau\in\mathcal T_t}
\mathbb E[U_\tau\mid\mathcal F_t]
=
U_t.
\]
Therefore
\[
V_t
=
\frac{1}{\gamma}\log U_t
=
-\xi_t,
\]
and immediate stopping is optimal.

For (ii), if \(U\) is a supermartingale, then for every
\(\tau\in\mathcal T_t\),
\[
\mathbb E[U_T\mid\mathcal F_\tau]\le U_\tau.
\]
Taking conditional expectations with respect to \(\mathcal F_t\) yields
\[
\mathbb E[U_T\mid\mathcal F_t]
\le
\mathbb E[U_\tau\mid\mathcal F_t].
\]
Since \(\tau=T\) is admissible,
\[
R_t=\mathbb E[U_T\mid\mathcal F_t],
\]
which gives the stated formula for \(V_t\).

For (iii), since \(\xi\) is deterministic,
\[
R_t
=
\min_{s\in[t,T]}e^{-\gamma\xi_s}.
\]
Because \(x\mapsto e^{-\gamma x}\) is decreasing,
\[
\min_{s\in[t,T]}e^{-\gamma\xi_s}
=
e^{-\gamma\max_{s\in[t,T]}\xi_s}.
\]
Continuity of \(\xi\) ensures that the maximum is attained, and therefore
the first maximizer is an optimal stopping time.

Finally, for (iv), It\^o's formula applied to
\(U_t=e^{-\gamma\xi_t}\) gives
\[
dU_t
=
U_t
\left[
\left(
-\gamma\mu+\frac{\gamma^2\sigma^2}{2}
\right)dt
-\gamma\sigma\,dW_t
\right].
\]
Thus \(U\) is a submartingale when
\[
\mu<\frac{\gamma\sigma^2}{2},
\]
a supermartingale when
\[
\mu>\frac{\gamma\sigma^2}{2},
\]
and a martingale at equality. The corresponding optimal stopping rules
follow from (i) and (ii).

In the supermartingale case, conditioning on the Brownian increment
\(W_T-W_t\) gives
\[
\mathbb E[e^{-\gamma\xi_T}\mid\mathcal F_t]
=
\exp\left(
-\gamma\xi_t
+
\left[
-\gamma\mu+\frac{\gamma^2\sigma^2}{2}
\right](T-t)
\right),
\]
and applying \((1/\gamma)\log(\cdot)\) yields
\[
V_t
=
-\xi_t
+
\left(
-\mu+\frac{\gamma\sigma^2}{2}
\right)(T-t).
\]
\end{proof}

\begin{remark}[Two Brownian benchmark examples]
\label{rem:brownian-entropic-benchmarks}
Proposition~\ref{prop:explicit-entropic-stopping}(iv) yields two particularly
simple benchmark examples.

First, consider
\[
\xi_t
=
aW_t-\frac{\gamma a^2}{2}t+ct,
\qquad c>0.
\]
Here
\[
\mu=c-\frac{\gamma a^2}{2},
\qquad
\sigma=a.
\]
Hence \(U_t=e^{-\gamma\xi_t}\) is a supermartingale precisely when
\[
c\ge\gamma a^2.
\]
If \(c>\gamma a^2\), terminal stopping is optimal and
\[
V_t
=
-aW_t
+\gamma a^2T
-\frac{\gamma a^2}{2}t
-cT.
\]
At the critical value \(c=\gamma a^2\), \(U\) is a martingale, so every
stopping time in \(\mathcal T_t\) is optimal and
\[
V_t=-\xi_t.
\]

Second, consider
\[
\xi_t
=
aW_t-\frac{\gamma a^2}{2}t-ct,
\qquad c>0.
\]
In this case
\[
\mu=-c-\frac{\gamma a^2}{2}
<
\frac{\gamma a^2}{2},
\]
so \(U\) is a submartingale. Consequently, immediate stopping is optimal,
\[
\tau_t^*=t,
\]
and
\[
V_t=-\xi_t.
\]
\end{remark}

\section{Algorithm and its Convergence}
\label{sec:numerical_convergence}

In this section we establish convergence of a deep learning scheme for the
reflected BSDE of Section~\ref{sec: assumpt}. Two obstructions have to be
removed. The driver is quadratic in $z$, whereas the available error estimates
for backward schemes require a globally Lipschitz driver; and the martingale
integrand $Z$ of a continuously reflected quadratic BSDE need not be bounded, so
the driver cannot simply be truncated without changing the solution. Both are
handled by passing through the discretely reflected equation, following
\cite{sun-liang-tang}, on which the integrand \emph{is} uniformly bounded and the
truncation is therefore exact. The resulting Lipschitz system is then treated by
the reflected deep backward scheme of \cite{hure-etal}.

\subsection{Markovian setting}\label{subsec:markov}

Throughout this section we work in a Markovian framework, which is more
restrictive than that of Section~\ref{sec: assumpt}. Let $X$ solve
\begin{equation}\label{eq:forward}
X_t=x_0+\int_0^t b(s,X_s)\,ds+\int_0^t\sigma(s)\,dW_s,
\qquad t\in[0,T],
\end{equation}
let the obstacle be $\xi_t=h(X_t)$ for a deterministic $h$, and let the driver be
$g(t,x,y,z)$. The upper reflected BSDE \eqref{eq:upper-rbsde} then reads
\begin{equation}\label{eq:rbsde-markov}
\begin{cases}
Y_t=-h(X_T)+\displaystyle\int_t^T g(s,X_s,Y_s,Z_s)\,ds-\int_t^T Z_s\,dW_s-(K_T-K_t),\\[4pt]
Y_t\le-h(X_t),\qquad \displaystyle\int_0^T\bigl(Y_t+h(X_t)\bigr)\,dK_t=0,\\[4pt]
K\ \text{continuous nondecreasing},\qquad K_0=0 .
\end{cases}
\end{equation}

\begin{enumerate}[label=(H\arabic*),leftmargin=*]
\item\label{assump:forward} \emph{Forward process.} $b$ and $\sigma$ are
deterministic, $\sigma$ does not depend on $x$, and for some $L>0$
\[
|b(t,0)|+|\sigma(t)|\le L,
\qquad
|b(t,x)-b(t,x')|\le L|x-x'| .
\]
\item\label{assump:obstacle} \emph{Obstacle.} $h:\R^d\to\R$ satisfies
$|h(x)|\le M_h$ and $|h(x)-h(x')|\le L|x-x'|$. No sign condition is imposed.
\item\label{assump:driver} \emph{Driver.} $g:[0,T]\times\R^d\times\R\times\R^d\to\R$
is continuous and, for constants $M_g,\gamma,L>0$,
\begin{align}
|g(t,x,y,z)|&\le M_g(1+|y|)+\tfrac{\gamma}{2}|z|^2, \label{eq:growth}\\
|g(t,x,y,z)-g(t,x',y,z)|&\le L\bigl(1+|z|\bigr)|x-x'|, \label{eq:lipx}\\
|g(t,x,y,z)-g(t,x,y',z)|&\le L|y-y'|, \label{eq:lipy}\\
|g(t,x,y,z)-g(t,x,y,z')|&\le L\bigl(1+|z|+|z'|\bigr)|z-z'| . \label{eq:lipz}
\end{align}
\item\label{assump:time} \emph{Time regularity.} For $s\le t$,
\[
|b(t,x)-b(s,x)|+|\sigma(t)-\sigma(s)|+|g(t,x,y,z)-g(s,x,y,z)|\le L(t-s)^{1/2}.
\]
\end{enumerate}

\noindent
Two further conditions are used only to quantify the discretisation error:
\begin{enumerate}[label=(R\arabic*),leftmargin=*]
\item\label{assump:R1} $h\in C_b^1$ with $L$-Lipschitz derivative.
\item\label{assump:R2} $h\in C_b^2$ with $L$-Lipschitz first and second
derivatives, and $\sigma$ is $L$-Lipschitz in $t$.
\end{enumerate}

\begin{remark}\label{rem:setting}
Relative to Section~\ref{sec: assumpt}, \eqref{eq:lipz} is the integrated form of
the hypothesis $|\partial_zg|\le C(1+|z|)$ imposed there and requires no
differentiability; \eqref{eq:growth} strengthens the linear-quadratic growth
bound by requiring the $y$-independent part to be bounded rather than merely in
$\mathcal H^2$. The genuine restriction is Markovianity, which is unavoidable:
the networks below take $X_{t_i}$ as input, so the scheme can only represent
$Y_t=u(t,X_t)$. That the drivers of Section~\ref{sec:drivers} satisfy
\ref{assump:driver}--\ref{assump:time} is verified in
Remark~\ref{rem:drivers-satisfy}.
\end{remark}

\noindent
It is convenient to record the reversal
\begin{equation}\label{eq:flip}
\widehat Y:=-Y,\qquad \widehat Z:=-Z,\qquad \widehat K:=K,\qquad
f(t,x,y,z):=-g(t,x,-y,-z),
\end{equation}
under which \eqref{eq:rbsde-markov} becomes the lower reflected equation
\[
\widehat Y_t=h(X_T)+\int_t^T f(s,X_s,\widehat Y_s,\widehat Z_s)\,ds
-\int_t^T\widehat Z_s\,dW_s+(\widehat K_T-\widehat K_t),
\qquad \widehat Y_t\ge h(X_t),
\]
with $\widehat K$ nondecreasing. Since $|f(t,x,y,z)|=|g(t,x,-y,-z)|$ and the
moduli in \ref{assump:driver} are invariant under $(y,z)\mapsto(-y,-z)$, $f$
satisfies \ref{assump:driver}--\ref{assump:time} with the same constants. The
results of \cite{sun-liang-tang} and \cite{hure-etal} are stated for the lower
reflected equation and are quoted below for $f$; all statements transfer to
\eqref{eq:rbsde-markov} through \eqref{eq:flip}.

\subsection{Discrete reflection and a priori bounds}\label{subsec:apriori}

Given a grid $\pi=\{0=t_0<\cdots<t_N=T\}$ with $|\pi|=\max_i(t_{i+1}-t_i)$ and
$N|\pi|\le L$, let $X^\pi$ denote the Euler scheme for \eqref{eq:forward}, so
that
\begin{equation}\label{eq:euler}
\E\Bigl[\sup_{0\le t\le T}|X_t-X^\pi_t|^{2p}\Bigr]\le C|\pi|^p,
\qquad p\ge1 ,
\end{equation}
which is the classical strong convergence estimate for the Euler--Maruyama
scheme under \ref{assump:forward}; see for instance \cite{kloeden-platen}.
For a set of reflection dates $R\subseteq\pi$ let $(Y^R,Z^R)$ denote the solution
of the discretely reflected equation, in which reflection acts only at the dates
of $R$ and the equation is solved exactly in between, and let
$(Y^{R,e},Z^{R,e})$ denote the same object driven by $X^\pi$ rather than $X$. We
write $(Y^e,Z^e,K^e)$ for the solution of \eqref{eq:rbsde-markov} driven by
$X^\pi$.

\begin{lemma}[\cite{sun-liang-tang}, Prop.~2.2 and Lem.~4.5]
\label{lem:a_priori_bounds}
Under \ref{assump:forward}--\ref{assump:driver} there are constants $M$ and
$M_z$, depending only on $L,M_g,M_h,\gamma$ and $T$, such that
\begin{equation}\label{eq:Ybound}
\|Y\|_\infty\le M,
\qquad
\|Z\cdot W\|_{\mathrm{BMO}}^2\le\frac{e^{4\gamma M}}{\gamma^2}
\bigl[1+2\gamma M_g(1+M)T\bigr],
\end{equation}
and, for every $R\subseteq\pi$,
\begin{equation}\label{eq:Zbound}
|Z^R_t|\le M_z,
\qquad
|Z^{R,e}_t|\le M_z,
\qquad 0\le t\le T .
\end{equation}
\end{lemma}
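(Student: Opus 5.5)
We prove the bounds for the lower reflected equation with driver $f$ from \eqref{eq:flip}; the statements for \eqref{eq:rbsde-markov} then follow because $Y=-\widehat Y$ and $Z=-\widehat Z$. The plan has three steps: a sup bound on $Y$ by comparison, a BMO bound from an exponential Itô transform, and a uniform bound on $Z^R$ by gradient estimates between reflection dates.

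\textbf{Step 1: the bound $\|Y\|_\infty\le M$.} We compare with two non-reflected quadratic BSDEs. For the upper bound, let $\bar y$ solve the ODE $\bar y'=-M_g(1+|\bar y|)$ with $\bar y_T=M_h$. Then $\bar y\ge M_h\ge h$, and by \eqref{eq:growth} the driver $f$ evaluated at $z=0$ is dominated by $M_g(1+|y|)$. Hence $\bar y$ is a supersolution lying above the obstacle. The minimality of the lower reflected solution, combined with the comparison theorem of \cite{kobyl-etal}, gives $\widehat Y\le\bar y_t\le (M_h+M_gT)e^{M_gT}$.

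For the lower bound, $\widehat Y\ge h\ge -M_h$ holds directly. Together these give $M:=(M_h+M_gT)e^{M_gT}$.

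\textbf{Step 2: the BMO estimate.} We apply Itô's formula to $\phi(\widehat Y)$ with $\phi(y)=\gamma^{-2}(e^{2\gamma(y+M)}-1-2\gamma(y+M))$, or more simply to $e^{2\gamma \widehat Y}$, between a stopping time $\tau$ and $T$. The convexity term $\tfrac12\phi''|Z|^2$ absorbs the quadratic part $\tfrac\gamma2|z|^2$ of the driver. The remaining drift contributes at most $\phi'\,M_g(1+M)$, which over $[\tau,T]$ is bounded by $2\gamma M_g(1+M)T$ after normalization.

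The reflection term needs care. On the contact set we have $\widehat Y=h\ge -M$, so the contribution $\int\phi'(\widehat Y)\,d\widehat K$ has a sign. We choose $\phi$ to be increasing on $[-M,M]$ so that this term enters with the favourable sign. Taking $\E[\cdot\mid\mathcal F_\tau]$ then yields the constant $e^{4\gamma M}\gamma^{-2}[1+2\gamma M_g(1+M)T]$, where the factor $e^{4\gamma M}$ comes from the ratio $\sup\phi/\inf\phi''$ over $[-M,M]$.

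\textbf{Step 3: the uniform bound on $Z^R$.} This is the main obstacle. Between consecutive reflection dates, $Y^R$ solves a non-reflected quadratic BSDE whose terminal value is $\max(h(X_{r}),Y^R_{r+})$ at the next reflection date $r$. We will show backward by induction over the dates in $R$ that $y\mapsto Y^R_r$ is Lipschitz in $x$, with a constant that does not depend on $R$.

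For a Lipschitz terminal function and a driver satisfying \eqref{eq:lipx}--\eqref{eq:lipz}, the standard argument linearizes the difference of two solutions started from $x$ and $x'$. The resulting linearization coefficient in $z$ is BMO, and a Girsanov change of measure turns the difference into a controlled expectation. This yields Lipschitz propagation with growth $e^{C(t-s)}$ over an interval of length $t-s$.

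Taking the maximum with $h$ preserves the Lipschitz constant bound $\max(L,\cdot)$. Hence the constants compound only through $e^{CT}$ across the whole horizon, with no accumulation from the number of reflection dates. Because $\sigma$ does not depend on $x$, the Markovian representation $Z^R_t=\partial_x u^R(t,X_t)\sigma(t)$ then bounds $|Z^R|$ by (Lipschitz constant)$\cdot L$.

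For $X^\pi$ the same argument applies, since the Euler scheme is Lipschitz in its initial point and has the same additive noise. This gives \eqref{eq:Zbound} with $M_z$ depending only on $L,M_g,M_h,\gamma$ and $T$, which is the content of \cite{sun-liang-tang}.

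The delicate point is keeping the BMO-based Girsanov bounds uniform in $R$. We handle this by using the BMO norm from Step 2, which is independent of $R$, because the discrete solutions satisfy the same sup bound $M$ by the argument of Step 1.
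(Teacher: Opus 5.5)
The paper gives no proof of this lemma; it is quoted from \cite{sun-liang-tang}. As a reconstruction, your Steps 1--2 are essentially sound, apart from a sign slip noted below. Step 3, however, carries the actual content of \eqref{eq:Zbound}, namely uniformity in $R$, and it has a genuine gap.

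\textbf{The gap in Step 3.} Your per-interval claim of ``Lipschitz propagation with growth $e^{C(r-t)}$'' is purely multiplicative only when the driver does not depend on $x$. Under \eqref{eq:lipx}, linearizing the difference of the solutions started from $x$ and $x'$ produces a source term
$\Delta_u:=f(u,X^x_u,Y^x_u,Z^x_u)-f(u,X^{x'}_u,Y^x_u,Z^x_u)$ with $|\Delta_u|\le Le^{LT}(1+|Z^x_u|)|x-x'|$. So between consecutive reflection dates $t<r$ you only obtain
\[
|u^R(t,x)-u^R(t,x')|\le\Bigl(e^{C(r-t)}\,\mathrm{Lip}\,u^R(r,\cdot)+C\,\E^{\mathbb Q}\Bigl[\int_t^r\bigl(1+|Z^x_u|\bigr)\,du\Bigr]\Bigr)|x-x'| .
\]
With only the BMO control from Step 2, the last expectation is of order $\sqrt{r-t}$, not $r-t$. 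Iterating backward over the dates of $R$ therefore accumulates $\sum_j\sqrt{r_{j+1}-r_j}$, which is $\sqrt{|R|\,T}$ on a uniform grid and blows up as $R$ is refined. Taking the maximum with $h$ does nothing to this additive term. Uniform-in-$R$ BMO constants on each interval, the point you flag as delicate, do not prevent it either. As written, your argument yields an $R$-uniform bound only for $x$-independent drivers. That happens to cover the drivers of Section~\ref{sec:drivers}, but not the lemma under \ref{assump:driver}.

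\textbf{How to close it.} Do not iterate interval by interval; linearize once on all of $[t,T]$. At each reflection date write $\max(a,b)-\max(a',b')=\lambda(a-a')+(1-\lambda)(b-b')$ with an $\F_r$-measurable $\lambda\in[0,1]$. Then use a single Girsanov measure built from the $z$-linearization coefficient on the whole of $[t,T]$, whose BMO norm is uniform in $R$. The weights multiplying the obstacle increments sum to at most one, and $\E^{\mathbb Q}\int_t^T|Z^x_u|\,du\le\sqrt T\,C$ (via Kazamaki) enters only once. Alternatively, insert $|Z^R_u|\le L\,\mathrm{Lip}\,u^R(u,\cdot)$ into the source term and close with Gronwall over $[0,T]$ through the reflection dates. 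This needs $|Z|\le|\sigma|\,\mathrm{Lip}\,u$ without differentiability of $u^R$, since the max creates kinks.

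\textbf{Sign slip in Step 2.} For the lower equation, $d\widehat Y=-f\,dt+\widehat Z\,dW-d\widehat K$. It\^o's formula therefore puts $+\int_\tau^T\phi'(\widehat Y)\,d\widehat K\ge0$ on the side you must bound when $\phi$ is increasing, which is the unfavourable sign. Apply your $\phi$ to $Y=-\widehat Y$ instead, where the reflection enters as $-\int_\tau^T\phi'(Y)\,dK\le0$. On $[-M,M]$ you then have $\tfrac12\phi''-\tfrac\gamma2\phi'=1+e^{2\gamma(y+M)}\ge1$, $0\le\phi\le\gamma^{-2}e^{4\gamma M}$ and $0\le\phi'\le2\gamma^{-1}e^{4\gamma M}$, which give exactly the stated constant.

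For $Y^R$ two further adjustments are needed. First, check that the jumps at reflection dates have the favourable sign ($Y^R_{r+}\ge Y^R_r$ in the upper formulation). Second, the lower bound in $|Y^R|\le M$ cannot come from $\widehat Y\ge h$, which fails between reflection dates; use the subsolution $-\bar y$ instead.
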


\begin{remark}\label{rem:ZnotZR}
The bound \eqref{eq:Zbound} holds for the \emph{discretely} reflected integrand,
uniformly in $R$, and not for $Z$ itself: the integrand of a continuously
reflected quadratic BSDE need not be bounded, and \eqref{eq:Ybound} gives only a
BMO estimate. This is the reason for introducing $(Y^R,Z^R)$ at all. The
hypothesis that $\sigma$ be deterministic in \ref{assump:forward} is what secures
\eqref{eq:Zbound}; the case of state-dependent $\sigma$ is open, see
\cite{sun-liang-tang}.
\end{remark}

\subsection{Truncated driver and the scheme}\label{subsec:scheme}

Let $\Pi_R(x)=\max(-R,\min(x,R))$ act componentwise and set
\begin{equation}\label{eq:truncated-driver-def}
g^{M_z}(t,x,y,z):=g\bigl(t,x,y,\Pi_{M_z}(z)\bigr),
\end{equation}
the truncation radius being fixed to the constant $M_z$ of
Lemma~\ref{lem:a_priori_bounds}. Since $|\Pi_{M_z}(z)|\le\sqrt d\,M_z$,
\eqref{eq:lipx}--\eqref{eq:lipz} give, for all $(x,y,z)$ and $(x',y',z')$,
\begin{equation}\label{eq:globlip}
\bigl|g^{M_z}(t,x,y,z)-g^{M_z}(t,x',y',z')\bigr|
\le L\bigl(1+\sqrt d\,M_z\bigr)|x-x'|+L|y-y'|+L\bigl(1+2\sqrt d\,M_z\bigr)|z-z'| ,
\end{equation}
so $g^{M_z}$ is globally Lipschitz. By \eqref{eq:Zbound},
$\Pi_{M_z}(Z^R_t)=Z^R_t$ and $\Pi_{M_z}(Z^{R,e}_t)=Z^{R,e}_t$, whence
\begin{equation}\label{eq:exact-truncation}
(Y^{R,M_z},Z^{R,M_z})=(Y^R,Z^R),
\qquad
(Y^{R,e,M_z},Z^{R,e,M_z})=(Y^{R,e},Z^{R,e}) :
\end{equation}
truncation changes nothing at the discretely reflected level. 

We now define the scheme, taking $R=\pi$. At each date two networks
$\mathcal U_i(\cdot;\theta)$ and $\mathcal Z_i(\cdot;\theta)$ approximate the
value and the integrand.

\begin{algorithm}[!htb]
\caption{Reflected deep backward dynamic programming with upper obstacle}
\label{alg:rdbdp}
\begin{enumerate}[label=\textup{Step \arabic*.}, leftmargin=*]
\item \textbf{Input.} Simulate the Euler paths
$(X^\pi_{t_i},\Delta W_i)_{i=0}^{N-1}$; the obstacle function $h$; the truncated
driver $g^{M_z}$ of \eqref{eq:truncated-driver-def}; network classes
$\mathcal U_i,\mathcal Z_i$.

\item \textbf{Terminal condition.} Set
$\widehat{\mathcal U}_N(X^\pi_{t_N})=-h(X^\pi_{t_N})$.

\item \textbf{Backward regression.} For $i=N-1,\ldots,0$, given
$\widehat{\mathcal U}_{i+1}$, minimise
\begin{align}\label{eq:section6-rdbdp-loss}
\widehat L_i(\theta)
:=\E\Bigl[\bigl|&\widehat{\mathcal U}_{i+1}(X^\pi_{t_{i+1}})
-\mathcal U_i(X^\pi_{t_i};\theta)\\
&+g^{M_z}\bigl(t_i,X^\pi_{t_i},\mathcal U_i(X^\pi_{t_i};\theta),
\mathcal Z_i(X^\pi_{t_i};\theta)\bigr)\Delta t_i
-\mathcal Z_i(X^\pi_{t_i};\theta)\Delta W_i\bigr|^2\Bigr],\nonumber
\end{align}
and let $\theta_i^*$ be a minimiser.

\item \textbf{Upper reflection.} Set
\begin{equation}\label{eq:hard_projection}
\widehat{\mathcal U}_i(x)=\min\bigl\{\mathcal U_i(x;\theta_i^*),\,-h(x)\bigr\},
\qquad
\widehat{\mathcal Z}_i(x)=\mathcal Z_i(x;\theta_i^*).
\end{equation}

\item \textbf{Output.} $(\widehat{\mathcal U}_i,\widehat{\mathcal Z}_i)_{i=0}^{N-1}$
and $\widehat Y_0=\widehat{\mathcal U}_0(x_0)$.
\end{enumerate}
\end{algorithm}

\noindent
Algorithm~\ref{alg:rdbdp} projects at every date, so it is discretely reflected
with $R=\pi$, the case of \eqref{eq:exact-truncation}. The estimates of
\cite{sun-liang-tang} below hold for general $R\subseteq\pi$; taking $R=\pi$
replaces their $|R|$ by $|\pi|$ and their $\alpha_i(\kappa)$ by $\alpha_i(N)$.

\subsection{Convergence}\label{subsec:convergence}

Write $(\bar Y^\pi,\bar Z^\pi)$ for the discrete-time scheme obtained from
Algorithm~\ref{alg:rdbdp} when the conditional expectations are computed exactly,
and let
\[
\epsilon_i^{\mathcal N,\tilde u}:=\inf_\xi\E\bigl|\tilde u_i(X^\pi_{t_i})-\mathcal U_i(X^\pi_{t_i};\xi)\bigr|^2,
\qquad
\epsilon_i^{\mathcal N,\tilde z}:=\inf_\eta\E\bigl|\tilde z_i(X^\pi_{t_i})-\mathcal Z_i(X^\pi_{t_i};\eta)\bigr|^2
\]
be the $L^2$ approximation errors of the network classes for the conditional
expectations $\tilde u_i,\tilde z_i$ arising at step $i$. The total error is
measured by
\begin{equation}\label{eq:error-functional}
\cE\bigl[(\widehat{\mathcal U},\widehat{\mathcal Z}),(Y,Z)\bigr]
:=\max_{i=0,\dots,N-1}\E\bigl|Y_{t_i}-\widehat{\mathcal U}_i(X^\pi_{t_i})\bigr|^2
+\E\Bigl[\sum_{i=0}^{N-1}\int_{t_i}^{t_{i+1}}
\bigl|Z_t-\widehat{\mathcal Z}_i(X^\pi_{t_i})\bigr|^2dt\Bigr].
\end{equation}

\begin{proposition}\label{thm:main_convergence}
Let \ref{assump:forward}--\ref{assump:time} hold and let $R=\pi$. There is a
constant $C>0$, independent of $\pi$ and of the network classes, such that
\begin{equation}\label{eq:main-bound}
\cE\bigl[(\widehat{\mathcal U},\widehat{\mathcal Z}),(Y,Z)\bigr]
\le C\Bigl(\varepsilon(\pi)
+\sum_{i=0}^{N-1}\bigl(N\epsilon_i^{\mathcal N,\tilde u}+\epsilon_i^{\mathcal N,\tilde z}\bigr)\Bigr),
\end{equation}
where $\varepsilon(\pi)$ is the discretisation error. Under \ref{assump:R1},
$\varepsilon(\pi)=O(|\pi|^{1/2})$ for the $Y$-component and
$O(|\pi|^{1/2})$ for the $Z$-component of \eqref{eq:error-functional}; under
\ref{assump:R2} the $Y$-component improves to $O(|\pi|)$, while the
$Z$-component does not improve.

Here these rates refer to the squared error quantities in
\eqref{eq:error-functional}. Equivalently, at the norm level, the
corresponding rate for the $Y$-component is $O(|\pi|^{1/4})$ under
\ref{assump:R1} and $O(|\pi|^{1/2})$ under \ref{assump:R2}, while the
$H^2$ error for the $Z$-component is $O(|\pi|^{1/4})$ under both
assumptions.
\end{proposition}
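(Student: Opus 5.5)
The proof splits the total error by the triangle inequality into three pieces:
\[
Y-\widehat{\mathcal U}=(Y-Y^{\pi})+(Y^{\pi}-Y^{\pi,e})+(Y^{\pi,e}-\widehat{\mathcal U}),
\]
and likewise for $Z$. Here $(Y^\pi,Z^\pi)$ is the equation discretely reflected at the dates of $R=\pi$, and $(Y^{\pi,e},Z^{\pi,e})$ is the same object driven by the Euler process $X^\pi$. Since $(a+b+c)^2\le3(a^2+b^2+c^2)$, it suffices to bound each piece in the functional \eqref{eq:error-functional}. All comparisons go through the reversed lower equation \eqref{eq:flip}, so that the results of \cite{sun-liang-tang} and \cite{hure-etal}, which are stated for lower reflection with driver $f$, apply verbatim.

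\emph{Discretisation part.} I would invoke the error estimates of \cite{sun-liang-tang} for the quadratic reflected equation; these use the BMO bound \eqref{eq:Ybound} and the uniform bound \eqref{eq:Zbound}. The first estimate controls the gap between continuous and discrete reflection, $Y-Y^\pi$ and $Z-Z^\pi$. It is of order $|\pi|^{1/2}$ in squared norm under \ref{assump:R1}, improving to $|\pi|$ for $Y$ under \ref{assump:R2}, with no gain for $Z$. The second estimate controls the Euler perturbation $Y^\pi-Y^{\pi,e}$ via \eqref{eq:euler}. The third step is the backward time-discretisation of $(Y^{\pi,e},Z^{\pi,e})$ into $(\bar Y^\pi,\bar Z^\pi)$, where conditional expectations are taken exactly. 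By \eqref{eq:exact-truncation}, the equation with driver $g$ coincides with the one with the globally Lipschitz driver $g^{M_z}$ from \eqref{eq:globlip}. For that driver the standard Lipschitz estimate for discretely reflected backward Euler schemes (as in \cite{hure-etal}) applies, and it needs only the path regularity of $Z^{\pi,e}$ and \ref{assump:time}. Together these three estimates give the term $\varepsilon(\pi)$.

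\emph{Network part.} It remains to compare the Lipschitz scheme $(\bar Y^\pi,\bar Z^\pi)$ with its network version. Because the truncated driver is globally Lipschitz with constants depending only on $L,d,M_z$, I would repeat the argument of \cite{hure-etal}, Theorem for RDBDP, step by step. At step $i$ define $\tilde u_i,\tilde z_i$ as the conditional-expectation targets computed from $\widehat{\mathcal U}_{i+1}$. The minimality of $\theta_i^*$ for \eqref{eq:section6-rdbdp-loss} then bounds the distance of the network output to $(\tilde u_i,\tilde z_i)$ by $\epsilon_i^{\mathcal N,\tilde u}+\Delta t_i\,\epsilon_i^{\mathcal N,\tilde z}$, up to the usual martingale orthogonality. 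The hard projection \eqref{eq:hard_projection} is $1$-Lipschitz, so it does not increase errors. A discrete Gronwall lemma with the Young-inequality choice that yields factors $(1+C|\pi|)$ then propagates the errors backward. The $Y$-error terms pick up a factor $N$ through the division by $\Delta t_i$, and this produces the $N\epsilon_i^{\mathcal N,\tilde u}$ term in \eqref{eq:main-bound}. The $Z$-error is recovered by summing the one-step identities.

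\emph{Main obstacle.} The delicate point is that the network iterate $\widehat{\mathcal Z}_i$ need not satisfy $|\widehat{\mathcal Z}_i|\le M_z$, so the scheme is evaluated in regions where $g^{M_z}\neq g$. I would handle this by never comparing the network scheme with the quadratic equation directly. The only comparison is against $(\bar Y^\pi,\bar Z^\pi)$ for the truncated driver, where global Lipschitz continuity is all that is used, and exactness of truncation enters only at the level of \eqref{eq:exact-truncation}. A second point needs care: $C$ must be independent of $\pi$. This follows because $M_z$ is uniform in $R$ by Lemma~\ref{lem:a_priori_bounds}, hence so are the Lipschitz constants in \eqref{eq:globlip}. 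Finally, the norm-level rates stated in the proposition follow from the squared ones by taking square roots.
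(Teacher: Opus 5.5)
The gap is in the integrand part of the discretisation error. Your network part agrees with the paper: the targets $\tilde u_i,\tilde z_i$, minimality of $\theta_i^*$, the $1$-Lipschitz projection, and the factor $N$ from Young's inequality. So does your argument that $C$ is independent of $\pi$ because $M_z$ is uniform in $R$. The unboundedness of $\widehat{\mathcal Z}_i$, which you single out as the main obstacle, is indeed harmless.

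You route $Z$ through $Z^\pi-Z^{\pi,e}$. That is, you perform the Euler perturbation between two equations that are both reflected at every one of the $N$ grid dates, and you justify it only by \eqref{eq:euler}, which you state for $Y$ alone. This is exactly the comparison the paper avoids. The estimate of \cite[Lem.~4.1]{sun-liang-tang} for $Z^R-\bar Z^\pi$, whose forward processes differ, carries $\alpha_2(N)=N^{1/2}$, and $\alpha_2(N)|\pi|^{1/2}=(N|\pi|)^{1/2}$ does not vanish.

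The mechanism shows up in the naive energy estimate. Let $\widetilde Y$ be the pre-reflection value and $\delta h_{r_j}=h(X_{r_j})-h(X^\pi_{r_j})$. At a reflection date you only know $|\delta Y_{r_j}|\le\max(|\delta\widetilde Y_{r_j}|,|\delta h_{r_j}|)$. The telescoping identity for $\E\int_0^T|\delta Z_t|^2dt$ therefore pays $\E|\delta h_{r_j}|^2=O(|\pi|)$ once per date, which is $O(N|\pi|)=O(1)$ in total.

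The paper avoids this by reordering:
\begin{itemize}
\item it does the Euler perturbation under continuous reflection, $\|Z-Z^e\|^2_{\mathcal H^2}\le C\|X-X^\pi\|\le C|\pi|^{1/2}$ by \cite[Thm.~3.2]{sun-liang-tang};
\item it then discretises the reflection of the Euler-driven equation, $Z^e\to Z^{R,e}$, by \cite[Lem.~4.2]{sun-liang-tang};
\item only then does it compare $Z^{R,e}$ with $\bar Z^\pi$, which share the forward process and hence the obstacle values, so that only $\alpha_1(N)$ enters.
\end{itemize}

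Your ordering can be repaired, but it needs a stability lemma for discretely reflected equations that is uniform in $R$, and you do not supply one. Let $\Delta K^\pi_{r_j},\Delta K^{\pi,e}_{r_j}$ be the reflection jumps. A jump occurs only where the solution sits on its obstacle, so
$|\delta Y_{r_j}|^2-|\delta\widetilde Y_{r_j}|^2\le 2|\delta h_{r_j}|(\Delta K^\pi_{r_j}+\Delta K^{\pi,e}_{r_j})$.
The sum over dates is then at most $2\max_j|\delta h_{r_j}|\,(K^\pi_T+K^{\pi,e}_T)$. The bound $\E[(K^\pi_T)^2+(K^{\pi,e}_T)^2]\le C$, uniformly in $R$, follows from the equations, the boundedness of the discretely reflected values, and \eqref{eq:Zbound}. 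Cauchy--Schwarz, \eqref{eq:euler}, and \eqref{eq:globlip} for the drift terms then give $\|Z^\pi-Z^{\pi,e}\|^2_{\mathcal H^2}\le C|\pi|^{1/2}$, the discrete analogue of \cite[Thm.~3.2]{sun-liang-tang}.

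For $Y^\pi-Y^{\pi,e}$ you also need a pathwise argument to reach the $O(|\pi|)$ required under \ref{assump:R2}. Combine the max bound above with a linearised representation under a measure with bounded kernel. An $L^2$ recursion would again accumulate over the $N$ dates.

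Finally, you place the \ref{assump:R1}/\ref{assump:R2} dependence in the continuous-to-discrete reflection step and call the time discretisation standard. In the paper's accounting, it is the time-discretisation estimate for an equation reflected at every grid date that carries the $N$-dependent factors $\alpha_1(N),\epsilon_1(\pi),\epsilon_2(\pi)$. The stated rates come from quoting that estimate with this dependence and using $N|\pi|\le L$.
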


\begin{proof}
Throughout, constants depend only on the data of
\ref{assump:forward}--\ref{assump:time} and on $M_z$. By
\eqref{eq:exact-truncation} the driver may be replaced by $g^{M_z}$ in every
discretely reflected object without altering it, and by \eqref{eq:globlip} that
driver is globally Lipschitz, so all estimates below are for the Lipschitz
system. The estimates for the passage to discrete reflection and for the time
discretisation are those of \cite{sun-liang-tang}, and the estimate for the
network error is that of \cite{hure-etal}; both are restated here for
completeness and are not original to this paper. What is established below is
their combination.

For the value component, introduce the discrete $\mathcal S^2$ norm
\[
\|U\|_{\mathcal S^2_\pi}
:=
\left(
\max_{0\le i\le N-1}\mathbb E|U_{t_i}|^2
\right)^{1/2}.
\]
Then
\[
\|Y-\widehat {\mathcal U} \|_{\mathcal S^2_\pi}
\le
\|Y-Y^R\|_{\mathcal S^2_\pi}
+
\|Y^R-\bar Y^\pi\|_{\mathcal S^2_\pi}
+
\|\bar Y^\pi-\widehat {\mathcal U}\|_{\mathcal S^2_\pi}.
\]
The three terms are respectively the passage from continuous to discrete
reflection, the time discretisation, and the replacement of the conditional
expectations by networks. For the first, \cite[Lem.~4.2]{sun-liang-tang} gives
\[
\max_{j}\Bigl\|\sup_{t\in[r_j,r_{j+1}]}|Y_t-Y^{R}_t|\Bigr\|_{L^2}
\le C|\pi|^{1/4},
\]
improving to $C|\pi|^{1/2}$ under \ref{assump:R1}. For the second, the system
being Lipschitz, \cite[Lem.~4.1]{sun-liang-tang} gives
\[
\bigl\|Y^{R}-\bar Y^\pi\bigr\|_{L^2}
\le C\bigl(\alpha_1(N)|\pi|^{1/2}+\epsilon_1(\pi)\bigr),
\qquad
(\alpha_1,\epsilon_1)=
\begin{cases}
(N^{1/4},|\pi|^{1/4}) & \text{under \ref{assump:R1}},\\
(1,|\pi|^{1/2}) & \text{under \ref{assump:R2}}.
\end{cases}
\]
Since $N|\pi|\le L$,
\[
\alpha_1(N)|\pi|^{1/2}=
\begin{cases}
N^{1/4}|\pi|^{1/2}\le L^{1/4}|\pi|^{1/4} & \text{under \ref{assump:R1}},\\
|\pi|^{1/2} & \text{under \ref{assump:R2}},
\end{cases}
\]
and the first two terms together are $O(|\pi|^{1/4})$ under \ref{assump:R1} and
$O(|\pi|^{1/2})$ under \ref{assump:R2}.

For the martingale integrand the same route is unavailable, the corresponding
estimate in \cite[Lem.~4.1]{sun-liang-tang} carrying $\alpha_2(N)=N^{1/2}$
rather than $\alpha_1$, so that $\alpha_2(N)|\pi|^{1/2}=(N|\pi|)^{1/2}$ does not
vanish. One passes instead through the family driven by $X^\pi$. The stability
estimate \cite[Thm.~3.2]{sun-liang-tang} together with \eqref{eq:euler} gives
\[
\bigl\|Z-Z^e\bigr\|^2_{\mathcal H^2}\le C\bigl\|X-X^\pi\bigr\|\le C|\pi|^{1/2},
\]
then $\|Z^e-Z^{R,e}\|_{\mathcal H^2}\le C|\pi|^{1/4}$ by
\cite[Lem.~4.2]{sun-liang-tang}, and finally the third estimate of
\cite[Lem.~4.1]{sun-liang-tang} gives
\[
\bigl\|Z^{R,e}-\bar Z^\pi\bigr\|_{\mathcal H^2}
\le C\bigl(\alpha_1(N)|\pi|^{1/2}+\epsilon_2(\pi)\bigr),
\qquad \epsilon_2(\pi)=|\pi|^{1/4},
\]
with $\alpha_1$ as above. Each of these three terms is $O(|\pi|^{1/4})$, and $\epsilon_2$ is unchanged by \ref{assump:R2}, which is the
assertion that the estimate for the integrand does not improve under the
stronger regularity.

It remains to bound the network error, the third term of the decomposition
above. By \eqref{eq:globlip} the system driven by $g^{M_z}$ satisfies hypothesis
(H1) of \cite{hure-etal}, and Algorithm~\ref{alg:rdbdp} is their RDBDP scheme
with $\min$ in place of $\max$ in the projection step
\eqref{eq:hard_projection}. The argument of \cite[proof of Thm.~4.4, from
(4.41)]{hure-etal} therefore applies verbatim, the only change being the use of
\[
\bigl|\min(a,c)-\min(b,c)\bigr|\le|a-b|,
\qquad a,b,c\in\R,
\]
in place of the corresponding inequality for $\max$. That argument uses only the
Lipschitz property of the driver and the identity
$\arg\min_\theta\widehat L_i(\theta)=\arg\min_\theta\widetilde L_i(\theta)$, and
is insensitive to the law of the forward process, so $X$ may be replaced by
$X^\pi$ throughout. It yields
\[
\max_{i=0,\dots,N-1}\E\bigl|\bar Y^\pi_{t_i}-\widehat{\mathcal U}_i(X^\pi_{t_i})\bigr|^2
\le C\sum_{i=0}^{N-1}\bigl(N\epsilon_i^{\mathcal N,\tilde u}+\epsilon_i^{\mathcal N,\tilde z}\bigr),
\]
and the corresponding bound for the integrand by the argument of
\cite[proof of Thm.~4.1, Step 5]{hure-etal}. Combining the displays gives
\eqref{eq:main-bound}.
\end{proof}

\begin{remark}\label{rem:constant}
The constant $C$ in \eqref{eq:main-bound} depends on the Lipschitz constant of
$g^{M_z}$, hence by \eqref{eq:globlip} on $M_z$, which involves exponential terms
in the data. The estimate is therefore qualitatively sharp but numerically weak.
An alternative, developed for unreflected quadratic BSDEs in
\cite{chassagneux-richou}, is to let the truncation radius grow with the grid
rather than fixing it, balancing a nonzero truncation error against a slowly
growing Lipschitz constant and recovering a rate $|\pi|^{1-\eta}$ for every
$\eta>0$ with a controlled constant. Extending that construction to the reflected
deep backward scheme is left for future work, as is the case of state-dependent
$\sigma$ (Remark~\ref{rem:ZnotZR}).
\end{remark}

\begin{remark}\label{rem:provenance}
The discretisation estimates are obtained in \cite{sun-liang-tang} for a scheme
in which the conditional expectations are computed exactly, and the network error
estimate in \cite{hure-etal} for a Lipschitz driver. The contribution of this
section is the combination of the two, the treatment of the upper obstacle, and
the application to the drivers of Section~\ref{sec:drivers}.
\end{remark}

\section{Examples}\label{sec:drivers}

The purpose of this section is to illustrate how different specifications of the reflected 
BSDE driver correspond to different financial interpretations of ambiguity and
risk evaluation. In the reflected setting, the driver determines the nonlinear
risk assessment during the continuation region, while the obstacle represents
the immediate liquidation or exercise value. Thus each example below should be
read as a particular financial mechanism generating a dynamic stopping risk
measure.

From a risk measure perspective, the variable \(Y\) represents the current
risk adjusted value of the future position, whereas \(Z\) captures the exposure
of this value to Brownian market uncertainty. A linear term in \(Y\) naturally
corresponds to discounting, or equivalently to the time value of future cash
flows. A quadratic term in \(Z\) corresponds to entropic model ambiguity,
exponential utility valuation, or robustness with respect to changes of
probability measure.

The first example isolates ambiguity in the discount factor and is related to
cash-subadditive risk measures. The second example recalls the classical
discounted entropic driver. The third example combines both mechanisms by
allowing uncertainty in the discount rate and in the entropic ambiguity
parameter. Finally, the American put example shows how these drivers enter a
reflected BSDE with an economically meaningful stopping boundary.

\subsection{Bounded discount rate ambiguity}\label{subsec:bounded-beta}

Let the discount rate belong to the bounded interval
\[
\beta_t\in[\underline\beta,\overline\beta],
\qquad
0\le \underline\beta\le \overline\beta .
\]
For \(t\le \tau\), define
\[
\rho_{t,\tau}(\xi_\tau)
:=
\sup_{\beta\in[\underline\beta,\overline\beta]}
\mathbb E
\left[
e^{-\int_t^\tau \beta_s\,ds}
(-\xi_\tau)
\;\middle|\;
\mathcal F_t
\right].
\]
This is a pure discount-ambiguity risk measure. The probability measure is fixed,
and the adversary only chooses the discount rate.

The associated reflected BSDE driver is
\[
g(t,y,z)
=
\sup_{\beta\in[\underline\beta,\overline\beta]}
\{-\beta y\}.
\]
For simplicity and to obtain an explicit benchmark, we take the ambiguity
bounds $\underline\beta$ and $\overline\beta$ to be constant. More generally,
one may consider bounded progressively measurable processes
$\underline\beta_t$ and $\overline\beta_t$, with
$0\leq\underline\beta_t\leq\overline\beta_t$, without changing the basic
supremum structure of the driver.
Equivalently,
\[
g(t,y,z)
=
-\overline\beta y\,\mathbf 1_{\{y<0\}}
-
\underline\beta y\,\mathbf 1_{\{y\ge0\}}.
\]

The stopping value
\[
Y_t
=
\essinf_{\tau\in\mathcal T_t}
\rho_{t,\tau}(\xi_\tau)
\]
is represented by the upper reflected BSDE
\[
Y_t
=
-\xi_T
+
\int_t^T
\sup_{\beta\in[\underline\beta,\overline\beta]}
\{-\beta Y_s\}\,ds
-
\int_t^T Z_s\,dW_s
-
(K_T-K_t),
\]
with
\[
Y_t\le -\xi_t,
\qquad
\int_0^T(Y_t+\xi_t)\,dK_t=0.
\]
The optimal stopping time is
\[
\tau_t^*
=
\inf\{s\ge t:Y_s=-\xi_s\}.
\]


In the liability region, where \(Y_s\le0\), the worst case discount rate is
\(\overline\beta\). Hence the equation reduces locally to
\[
Y_t
=
-\xi_T
+
\int_t^T
(-\overline\beta Y_s)\,ds
-
\int_t^T Z_s\,dW_s
-
(K_T-K_t).
\]

If the payoff process is deterministic, continuous, and nonnegative, then
\[
\rho_{t,s}(\xi_s)
=
-e^{-\overline\beta(s-t)}\xi_s,
\qquad s\in[t,T],
\]
and therefore
\[
Y_t
=
-\sup_{s\in[t,T]}
\left\{
e^{-\overline\beta(s-t)}\xi_s
\right\}.
\]
An optimal stopping time is any first maximizer of the discounted payoff:
\[
\tau_t^*
=
\inf\left\{
s\in[t,T]:
e^{-\overline\beta(s-t)}\xi_s
=
\sup_{u\in[t,T]}
e^{-\overline\beta(u-t)}\xi_u
\right\}.
\]

\subsection{Linear discounting and change of measure}
\label{subsec:linear-discount-girsanov}

We next consider an affine driver combining deterministic discounting with
a linear change-of-measure component. Let
\[
g(t,y,z)
=
-\beta y+\gamma\cdot z,
\qquad
\beta\geq 0,
\quad
\gamma\in\mathbb R^d.
\]
For simplicity, $\beta$ and $\gamma$ are taken to be constant in this
benchmark. More generally, the same representation extends to suitable
bounded progressively measurable processes $\beta_t$ and $\gamma_t$,
provided the corresponding stochastic exponential is a true martingale.

Since the driver is affine in $(y,z)$, Theorem~\ref{prop: properties}-(e)
implies that the associated stopping operator $\varphi_t$ is concave.

To obtain an explicit representation of the corresponding dynamic risk
evaluation, define the probability measure $\mathbb Q^\gamma$ by
\[
\left.
\frac{d\mathbb Q^\gamma}{d\mathbb P}
\right|_{\mathcal F_s}
=
\mathcal E\left(
\int_0^\cdot \gamma\cdot dW_u
\right)_s
=
\exp\left\{
\int_0^s \gamma\cdot dW_u
-\frac12\int_0^s|\gamma|^2\,du
\right\}.
\]
Since $\gamma$ is constant, Novikov's condition is automatically satisfied,
and hence the stochastic exponential is a true martingale. By Girsanov's
theorem, the process
\[
W_s^\gamma
:=
W_s-\int_0^s\gamma\,du
=
W_s-\gamma s
\]
is a $d$-dimensional Brownian motion under $\mathbb Q^\gamma$.

Indeed, for a fixed horizon $s\in[t,T]$, the BSDE associated with the
dynamic risk evaluation is
\[
Y_r
=
-\xi_s
+
\int_r^s
\bigl(
-\beta Y_u+\gamma\cdot Z_u
\bigr)\,du
-
\int_r^s Z_u\,dW_u,
\qquad r\in[t,s].
\]
In differential form,
\[
dY_r
=
\beta Y_r\,dr
-\gamma\cdot Z_r\,dr
+
Z_r\,dW_r.
\]
Since
\[
dW_r=dW_r^\gamma+\gamma\,dr,
\]
the terms involving $\gamma\cdot Z_r$ cancel, and therefore
\[
dY_r
=
\beta Y_r\,dr
+
Z_r\,dW_r^\gamma.
\]
Multiplying by the discount factor $e^{-\beta(r-t)}$ gives
\[
d\left(
e^{-\beta(r-t)}Y_r
\right)
=
e^{-\beta(r-t)}Z_r\,dW_r^\gamma.
\]
Hence the discounted process is a $\mathbb Q^\gamma$-martingale. Using the
terminal condition $Y_s=-\xi_s$, we obtain
\[
\rho_{t,s}(\xi_s)
=
Y_t
=
\mathbb E^{\mathbb Q^\gamma}
\left[
e^{-\beta(s-t)}(-\xi_s)
\,\middle|\,
\mathcal F_t
\right],
\qquad
\xi_s\in L^\infty(\mathcal F_s).
\]

Consequently, the corresponding optimal stopping risk functional admits
the representation
\[
\varphi_t(\xi)
=
\operatorname*{ess\,inf}_{\tau\in\mathcal T_t}
\rho_{t,\tau}(\xi_\tau)
=
\operatorname*{ess\,inf}_{\tau\in\mathcal T_t}
\mathbb E^{\mathbb Q^\gamma}
\left[
e^{-\beta(\tau-t)}(-\xi_\tau)
\,\middle|\,
\mathcal F_t
\right].
\]
Thus, in this affine case, the nonlinear stopping problem reduces to a
classical optimal stopping problem under the equivalent probability measure
$\mathbb Q^\gamma$, with discount rate $\beta$.
\subsection{Linear discounting with entropic model ambiguity}

A classical financially meaningful driver is
\[
g(t,y,z)
=
-\delta y+\frac{\gamma}{2}|z|^2,
\qquad
\delta\ge0,\quad \gamma>0.
\]
The term \(-\delta y\) represents linear discounting, while
\((\gamma/2)|z|^2\) is the standard entropic or model-ambiguity term.

The convex conjugate is computed as
\[
g^*(t,\beta,\alpha)
=
\sup_{(y,z) \in \R \times \R^d}
\left\{
-\beta y-\alpha\cdot z-g(t,y,z)
\right\}.
\]
Hence
\[
g^*(t,\beta,\alpha)
=
\sup_{y \in \R}\{(-\beta+\delta)y\}
+
\sup_{z \in \R^d}
\left\{
-\alpha\cdot z-\frac{\gamma}{2}|z|^2
\right\}.
\]
The supremum in \(z\) is attained at
\[
z^*=-\frac{\alpha}{\gamma},
\]
and gives
\[
\sup_{z \in \R^d}
\left\{
-\alpha\cdot z-\frac{\gamma}{2}|z|^2
\right\}
=
\frac{1}{2\gamma}|\alpha|^2.
\]
Therefore
\[
g^*(t,\beta,\alpha)
=
\begin{cases}
\dfrac{1}{2\gamma}|\alpha|^2,
& \beta=\delta,\\[1em]
+\infty,
& \beta\neq\delta.
\end{cases}
\]
Thus this model has entropic ambiguity in the probability measure, while the
discount rate is fixed at \(\delta\). It serves as the benchmark case for the
paired ambiguity model below.

\subsection{Entropic model ambiguity with uncertain discounting}

We now allow simultaneous ambiguity in the discount rate and in the entropic
ambiguity parameter. Let
\[
g(t,y,z)
=
\sup_{(\delta,\gamma)\in C}
\left\{
-\delta y+\frac{\gamma}{2}|z|^2
\right\},
\]
where \(\delta\) represents the discount rate and \(\gamma\) represents the
strength of entropic model ambiguity.

A separable ambiguity set is
\[
C=
[\underline\delta,\overline\delta]
\times
[\underline\gamma,\overline\gamma],
\]
with
\[
0\le \underline\delta\le\overline\delta,
\qquad
0<\underline\gamma\le\overline\gamma.
\]
In this case,
\[
g(t,y,z)
=
\sup_{\delta\in[\underline\delta,\overline\delta]}
(-\delta y)
+
\sup_{\gamma\in[\underline\gamma,\overline\gamma]}
\frac{\gamma}{2}|z|^2.
\]
Therefore
\[
g(t,y,z)
=
-\overline\delta y\,\mathbf 1_{\{y<0\}}
-
\underline\delta y\,\mathbf 1_{\{y\ge0\}}
+
\frac{\overline\gamma}{2}|z|^2.
\]
In the liability region \(y\le0\), this reduces to
\[
g(t,y,z)
=
-\overline\delta y
+
\frac{\overline\gamma}{2}|z|^2.
\]

For simplicity, we take the ambiguity bounds
\(\underline{\delta},\overline{\delta},
\underline{\gamma},\overline{\gamma}\)
to be constant in this benchmark. More generally, one may consider
bounded progressively measurable bounds
\[
\underline{\delta}_t\leq\overline{\delta}_t,
\qquad
\underline{\gamma}_t\leq\overline{\gamma}_t,
\]
leading to the time-dependent ambiguity set
\[
\mathcal C_t
=
[\underline{\delta}_t,\overline{\delta}_t]
\times
[\underline{\gamma}_t,\overline{\gamma}_t].
\]
The constant rectangular specification considered here is sufficient to
illustrate explicitly the interaction between discount rate ambiguity and
entropic model ambiguity.

We deliberately use a rectangular ambiguity set, so that the two sources
of ambiguity can vary independently. Indeed, the discount parameter
\(\delta\) and the entropic parameter \(\gamma\) play different roles in
the driver: \(\delta\) acts on the value variable \(y\), whereas
\(\gamma\) controls the quadratic exposure term \(|z|^2\).
For this reason, no direct ordering between \(\delta\) and \(\gamma\) is
imposed.

For fixed \((y,z)\), the worst-case parameters can be identified
explicitly. Since
\[
\frac{\gamma}{2}|z|^2
\]
is nondecreasing in \(\gamma\), the supremum over the entropic ambiguity
parameter is attained at
\[
\gamma^*=\overline{\gamma}.
\]
Similarly,
\[
\delta^*(y)
=
\begin{cases}
\overline{\delta}, & y<0,\\[1mm]
\underline{\delta}, & y\geq0.
\end{cases}
\]
Hence, in the liability region \(y\leq0\), the worst-case specification is
\[
(\delta^*,\gamma^*)
=
(\overline{\delta},\overline{\gamma}),
\]
and the driver becomes
\[
g(t,y,z)
=
-\overline{\delta}y
+
\frac{\overline{\gamma}}{2}|z|^2.
\]

The associated stopping value
\[
Y_t
=
\essinf_{\tau\in\mathcal T_t}
\rho_{t,\tau}(\xi_\tau)
\]
is represented by the upper reflected BSDE
\[
Y_t
=
-\xi_T
+
\int_t^T
\sup_{(\delta,\gamma)\in C}
\left\{
-\delta Y_s+\frac{\gamma}{2}|Z_s|^2
\right\}ds
-
\int_t^T Z_s\,dW_s
-
(K_T-K_t),
\]
with
\[
Y_t\le -\xi_t,
\qquad
\int_0^T(Y_t+\xi_t)\,dK_t=0.
\]
The optimal stopping time is
\[
\tau_t^*
=
\inf\{s\ge t:Y_s=-\xi_s\}.
\]

If the payoff is deterministic and nonnegative, then \(Z\equiv0\). In the
liability region, the driver becomes
\[
g(t,y,0)=-\overline\delta y.
\]
Consequently,
\[
\rho_{t,s}(\xi_s)
=
-e^{-\overline\delta(s-t)}\xi_s,
\]
and the stopping value is explicitly
\[
Y_t
=
-\sup_{s\in[t,T]}
\left\{
e^{-\overline\delta(s-t)}\xi_s
\right\}.
\]
Thus, in the deterministic case, the optimal stopping rule is to stop at the
first time at which the worst case discounted payoff is maximal.

\subsection{Geometric American put under entropic-discount ambiguity}
\label{subsec:put}

We consider an American put written on a single asset whose risk neutral price
\(S\) follows a geometric Brownian motion. In log-price coordinates
\(X_t=\log S_t\), the dynamics are
\[
dX_t
=
\left(r-\frac12\sigma^2\right)dt+\sigma\,dW_t,
\qquad
X_0=\log S_0.
\]
The exercise payoff is
\[
\xi_t=(K-S_t)^+=(K-e^{X_t})^+.
\]
It is bounded by \(K\) and has continuous paths, hence it satisfies the standing
obstacle assumptions.

Using the entropic discount ambiguity driver
\[
g(t,y,z)
=
\sup_{(\delta,\gamma)\in C}
\left\{
-\delta y+\frac{\gamma}{2}|z|^2
\right\},
\]
the stopping value
\[
Y_t
=
\essinf_{\tau\in\mathcal T_t}
\rho_{t,\tau}(\xi_\tau)
\]
solves the upper reflected BSDE
\[
Y_t
=
-\xi_T
+
\int_t^T
g(s,Y_s,Z_s)\,ds
-
\int_t^T Z_s\,dW_s
-
(K_T-K_t),
\qquad
Y_t\le -\xi_t,
\]
with
\[
\int_0^T(Y_t+\xi_t)\,dK_t=0.
\]

For the separable ambiguity set
\[
C=
[\underline\delta,\overline\delta]
\times
[\underline\gamma,\overline\gamma],
\]
and in the liability region \(Y_t\le0\), the driver reduces to
\[
g(t,Y_t,Z_t)
=
-\overline\delta Y_t
+
\frac{\overline\gamma}{2}|Z_t|^2.
\]
Thus the reflected BSDE used for the American put becomes
\[
Y_t
=
-\xi_T
+
\int_t^T
\left(
-\overline\delta Y_s
+
\frac{\overline\gamma}{2}|Z_s|^2
\right)ds
-
\int_t^T Z_s\,dW_s
-
(K_T-K_t),
\qquad
Y_t\le -\xi_t.
\]

The initial asset level and strike may be chosen as \(S_0=1\) and \(K=1.1\).
Then the option starts slightly in the money and the initial upper obstacle is
\[
-\xi_0=-(K-S_0)^+=-0.1.
\]
This makes the stopping constraint active and visible in the numerical
diagnostics.

\section{Numerical Diagnostics and Results}\label{sec:numerics}

We use the reflected deep backward dynamic programming scheme of
Section~\ref{sec:numerical_convergence}, Algorithm~\ref{alg:rdbdp}, to solve the
upper reflected BSDE \eqref{eq:upper-rbsde}--\eqref{eq:upper-skorokhod}
on three test cases:
\begin{enumerate}[label=\textup{(\arabic*)},leftmargin=*]
    \item American put under Black--Scholes ($g(y)=-r\,y$, no ambiguity),
    the standard optimal stopping benchmark for which a binomial reference price
    is available.
    \item Bounded discount rate ambiguity of Section~\ref{subsec:bounded-beta}, a
    standard linear example under ambiguity that still falls within the
    Lipschitz framework.
    \item Geometric American put under entropic discount ambiguity of
    Section~\ref{subsec:put}, which introduces the superlinear (quadratic in $z$)
    driver and is the non-Lipschitz target of the convergence analysis in
    Section~\ref{sec:numerical_convergence}.
\end{enumerate}

\begin{remark}\label{rem:drivers-satisfy}
The drivers of Section~\ref{sec:drivers} satisfy
\ref{assump:driver}--\ref{assump:time}. They are of the form
\[
g(t,x,y,z)=\sup_{(\delta,\gamma)\in\C}\psi(\delta,\gamma,y,z),
\]
with $\C$ bounded, $\delta\in[\underline\delta,\overline\delta]$ and
$\gamma\in[\underline\gamma,\overline\gamma]$, and are independent of $(t,x)$;
hence \eqref{eq:lipx} holds with $L=0$ and \ref{assump:time} is vacuous for $g$.
For $\psi(\delta,\gamma,y,z)=-\delta y+\tfrac{\gamma}{2}|z|^2$,
\[
-\overline\delta|y|
\;\le\;g(y,z)\;\le\;\overline\delta|y|+\tfrac{\overline\gamma}{2}|z|^2,
\]
giving \eqref{eq:growth} with $M_g=\overline\delta$ and
$\gamma=\overline\gamma$, while $|\sup_a F(a,\cdot)-\sup_a F(a,\cdot)|\le\sup_a|F(a,\cdot)-F(a,\cdot)|$
gives
\small
\[
|g(y,z)-g(y',z)|\le\sup_{\delta}\delta|y-y'|\le\overline\delta|y-y'|,
|g(y,z)-g(y,z')|\le\sup_\gamma\tfrac{\gamma}{2}\bigl||z|^2-|z'|^2\bigr|
\le\tfrac{\overline\gamma}{2}\bigl(|z|+|z'|\bigr)|z-z'| ,
\]
which are \eqref{eq:lipy} and \eqref{eq:lipz}. The remaining drivers of
Section~\ref{sec:drivers} are special cases. Boundedness of $\C$ is what these
estimates require.
\end{remark}
Cases (1) and (3) use the payoff $h(x)=(K-e^{x})^{+}$, which is bounded and
Lipschitz but not differentiable at the strike. They therefore satisfy the
standing assumptions \ref{assump:forward}--\ref{assump:time} of
Section~\ref{sec:numerical_convergence}, but not the additional regularity
\ref{assump:R1}--\ref{assump:R2} under which the discretisation error in
Proposition~\ref{thm:main_convergence} is quantified. Case (2) satisfies all of
them, and is the example for which the rate of
Proposition~\ref{thm:main_convergence} is available.
For each case we report (i) sample paths of the value process $Y_t$ against the
obstacle $-\xi_t$ and of the control process $Z_t$, (ii) the empirical
distribution of optimal stopping times, and (iii) the training loss at the
terminal time step. Numerical verification of the relevant properties of
Theorem~\ref{prop: properties} is collected in
Section~\ref{sec:numerics:properties}.

\paragraph{Common solver configuration.}
All experiments use the same network and training hyperparameters
(Table~\ref{tab:nn-params}). Outputs go through the upper-reflection projection
\eqref{eq:hard_projection}. Implementation is in Python with PyTorch on CPU.

\begin{table}[!htb]
\centering
\begin{tabular}{l l}
\hline
Architecture                           & 3 hidden layers, $50$ neurons per layer \\
Activation                             & $\tanh$ on hidden layers, identity on output \\
Optimizer                              & Adam, learning rate $10^{-3}$ \\
Batch size                             & $M = 2^{10} = 1024$ \\
Time discretization                    & $N = 50$ steps, $\Delta t = T/N$ \\
Base iterations per step               & $\mathrm{itr} = 300$ \\
Iteration multiplier on last $3$ steps & $\times 10$ \\
\hline
\end{tabular}
\caption{Network and training hyperparameters used in all three cases.}
\label{tab:nn-params}
\end{table}

\subsection{American put under Black--Scholes}
\label{subsec:num-bs}

This case is the canonical optimal stopping benchmark: the reflected BSDE of
Theorem~\ref{thm:rbsde-representation} reduces, under the affine driver
$g(y) = -r y$ and the put payoff $\xi_t = (K-S_t)^+$, to the classical American
put under Black--Scholes. The Cox--Ross--Rubinstein binomial price is available
as a reference, which fixes the baseline accuracy of the scheme before any
ambiguity is introduced.

The forward process is the GBM in log coordinates of Section~\ref{subsec:put}; the
upper obstacle is $-\xi_t = -(K-e^{X_t})^+$. The driver is recovered from
Section~\ref{sec:drivers} as the discount-only special case of
Section~\ref{subsec:bounded-beta} with $\underline\beta = \overline\beta = r$. Market
parameters are listed in Table~\ref{tab:bs-params}.

\begin{table}[!htb]
\centering
\begin{tabular}{l l l}
\hline
Initial price        & $S_0$    & $1.0$ \\
Strike               & $K$      & $1.1$ \\
Volatility           & $\sigma$ & $0.2$ \\
Risk-free rate       & $r$      & $0.05$ \\
Time horizon         & $T$      & $1.0$ \\
\hline
\end{tabular}
\caption{Market parameters, American put under Black--Scholes.}
\label{tab:bs-params}
\end{table}

The Cox--Ross--Rubinstein binomial price with $N=2000$ time steps yields
$P_{\mathrm{ref}} = 0.11973$ (the intrinsic value at $t=0$ is
$K-S_0 = 0.1$). The solver returns $-Y_0 = 0.11638$, an absolute gap of
$-0.0034$ and a relative deviation of $-2.8\%$.

Figure~\ref{fig:bs-diagnostics} collects the three diagnostics. Panel
\ref{fig:bs-traj} displays, on the left, three realised value paths $Y_t$ with
the corresponding moving obstacle $-\xi_t$; the value process stays below the
obstacle pathwise and makes contact during sufficiently in-the-money
excursions. On the right, the same panel shows the realised control process
$Z_t$, which fluctuates around $-\sigma \cdot \partial_x u$ and is mostly
negative (the put has nonpositive delta in the state $X_t = \log S_t$). Panel
\ref{fig:bs-tau} is the histogram of the realised first-contact times
$\tau^\ast$ of Theorem~\ref{thm:rbsde-representation} across $1024$
simulated paths. Roughly $41\%$ of paths exit strictly before $T$, with mean
conditional stopping time $\bar\tau^\ast \approx 0.47$.
The cluster of values just below $T$ visible in the histogram is largely
a consequence of the imposed terminal contact $Y_T=-\xi_T$, since the
contact time detector therefore identifies paths near the terminal step. Panel \ref{fig:bs-loss} reports the training loss for
the terminal step $n=N-2$ on a log scale; the loss drops by more than two
orders of magnitude and plateaus near $10^{-4}$. The losses at the other time
steps are of the same shape and reach smaller magnitudes (closer to
$10^{-6}$), since the conditional-expectation regression smooths the obstacle
non-smoothness as it propagates backward.

\begin{figure}[!htb]
\centering
\begin{subfigure}{0.95\linewidth}
  \centering
  \safeincludegraphics[width=\linewidth]{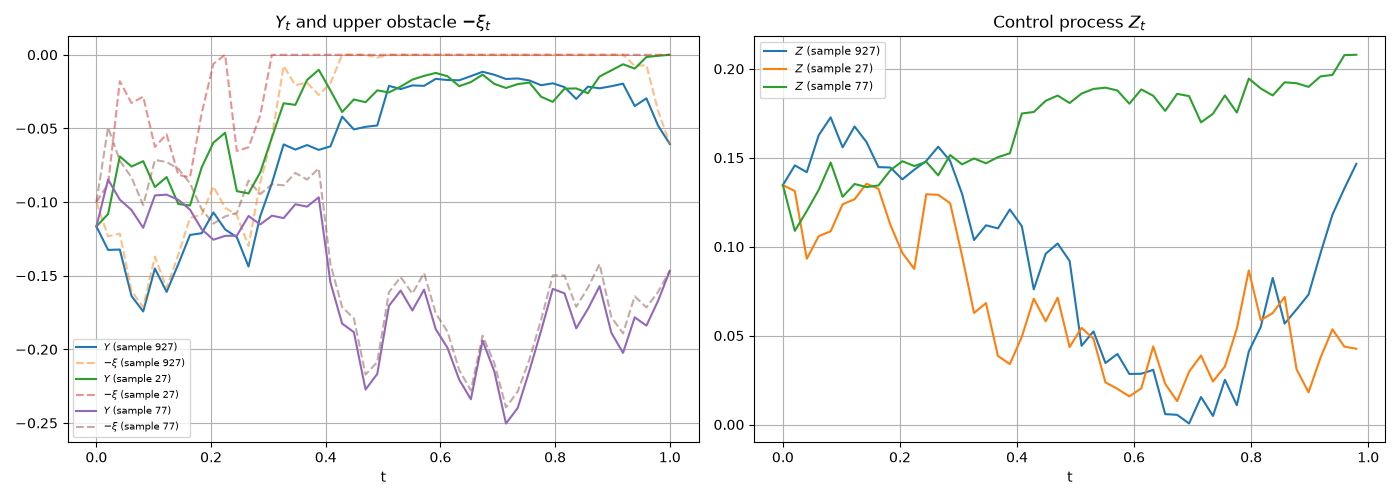}
  \caption{Three sample paths: left, $Y_t$ and obstacle $-\xi_t$; right,
  control process $Z_t$.}
  \label{fig:bs-traj}
\end{subfigure}\\[1ex]
\begin{subfigure}{0.45\linewidth}
  \centering
  \safeincludegraphics[width=\linewidth]{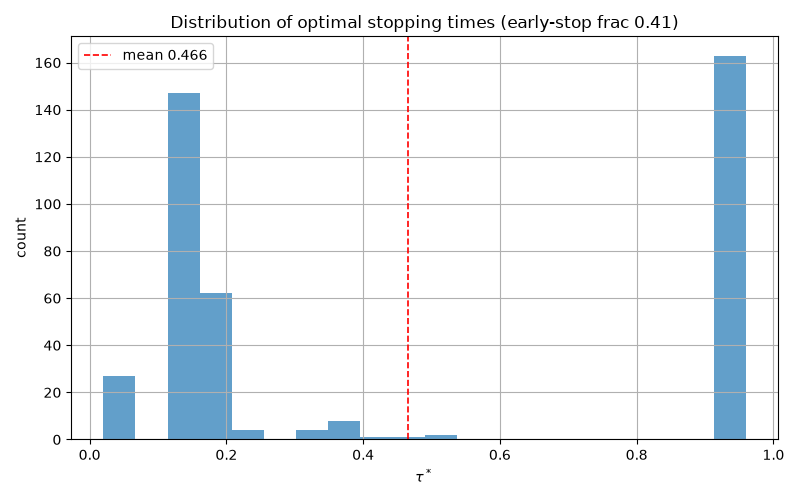}
  \caption{Histogram of $\tau^\ast$ over $1024$ paths
  (early-stop fraction $\approx 0.41$).}
  \label{fig:bs-tau}
\end{subfigure}\hfill
\begin{subfigure}{0.45\linewidth}
  \centering
  \safeincludegraphics[width=\linewidth]{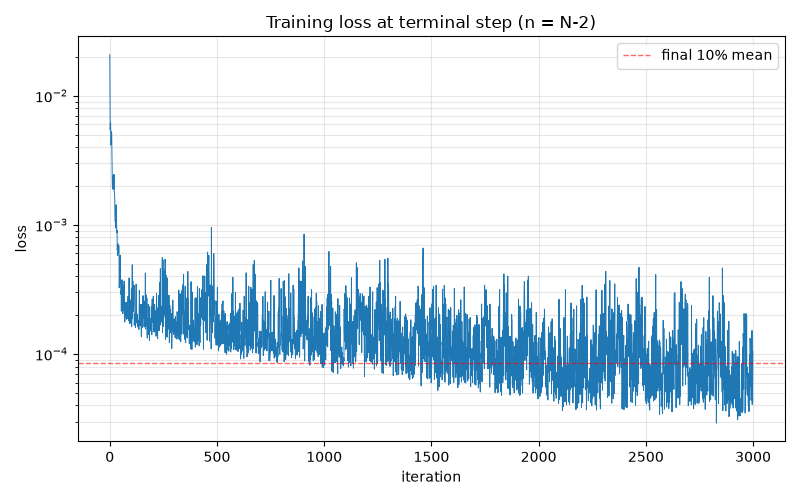}
  \caption{Training loss at the terminal step $n = N-2$ (log scale).}
  \label{fig:bs-loss}
\end{subfigure}
\caption{Black--Scholes American put: diagnostics from a single trained run.}
\label{fig:bs-diagnostics}
\end{figure}

\subsection{Bounded discount rate ambiguity}
\label{subsec:num-5-1}

This case applies the scheme to the driver of Section~\ref{subsec:bounded-beta},
$g(y) = \sup_{\beta \in [\underline\beta,\overline\beta]} \{-\beta y\}$, which
is convex in $y$ but no longer affine. Both branches of the sup become active
only when $Y$ visits both signs, so a nonnegative put payoff would pin
$Y\le0$ and collapse the band to $\overline\beta$. We therefore take the
collared obstacle
\[
\xi_t=c\tanh\bigl(X_t/c\bigr),
\qquad dX_t=dW_t,
\]
which is sign changing, and bounded by $c$ as required by
\ref{assump:obstacle}. It is smooth with bounded derivatives of every order, so
it satisfies \ref{assump:R1} and \ref{assump:R2} as well, and this is the one
test case for which the rate in Proposition~\ref{thm:main_convergence} is
available. Parameters are listed in Table~\ref{tab:5_1-params}.

\begin{table}[!htb]
\centering
\begin{tabular}{l l l}
\hline
Forward                  & $dX_t = dW_t$              & $X_0 = 0$ \\
Obstacle process         & $\xi_t = c\tanh(X_t/c)$    & sign-changing, $|\xi|\le c$ \\
Collar level             & $c$                        & $1.0$ \\
Lower discount bound     & $\underline\beta$          & $0.0$ \\
Upper discount bound     & $\overline\beta$           & $0.1$ \\
Time horizon             & $T$                        & $1.0$ \\
\hline
\end{tabular}
\caption{Forward, obstacle and discount band, Section~\ref{subsec:bounded-beta} case.}
\label{tab:5_1-params}
\end{table}

The solver returns $Y_0 = -0.0846$. Across interior time steps, $44.7\%$ of
$(t,\omega)$ samples have $Y_t \ge 0$, confirming that both branches of the sup
are exercised. Evaluated on $8192$ simulated paths, a fraction $0.54$ of paths
make first contact strictly before $T$, with mean $\bar\tau^\ast \approx 0.43$.

Figure~\ref{fig:51-diagnostics} reports the diagnostics. Panel
\ref{fig:51-traj} shows, on the left, the value process $Y_t$ together with
the moving obstacle $-\xi_t$ for three sample paths; the value stays below the
obstacle and makes contact on excursions where $X_t$ becomes sufficiently
negative. The collar flattens the obstacle for large $|X_t|$, so contact occurs
earlier than it would for an unbounded payoff. On the right, the same panel
shows the control process $Z_t$, which carries the trend of the sign-changing
obstacle and is damped where the collar saturates. Panel
\ref{fig:51-tau} is the histogram of $\tau^\ast$ over the paths making contact
before $T$, the terminal step being excluded by construction. The mass is
concentrated in the interior of $[0,T]$, rising to a peak around $t=0.5$ and
decaying thereafter, with mean $0.43$ and median $0.52$. Panel
\ref{fig:51-loss} reports the terminal-step training loss; its order of
magnitude and shape match Figure~\ref{fig:bs-loss}, and the losses at earlier
steps are smaller still. The band sup-structure does not destabilise the
recursion.

\begin{figure}[!htb]
\centering
\begin{subfigure}{0.95\linewidth}
  \centering
  \safeincludegraphics[width=\linewidth]{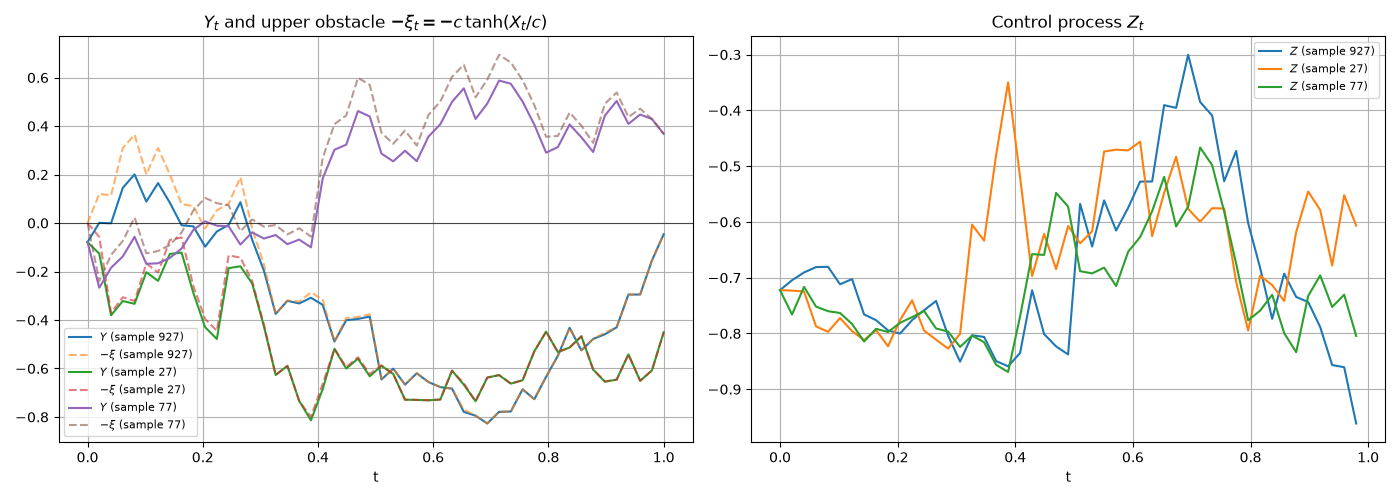}
  \caption{Three sample paths: left, $Y_t$ and obstacle
  $-\xi_t = -c\tanh(X_t/c)$; right, control process $Z_t$.}
  \label{fig:51-traj}
\end{subfigure}\\[1ex]
\begin{subfigure}{0.45\linewidth}
  \centering
  \safeincludegraphics[width=\linewidth]{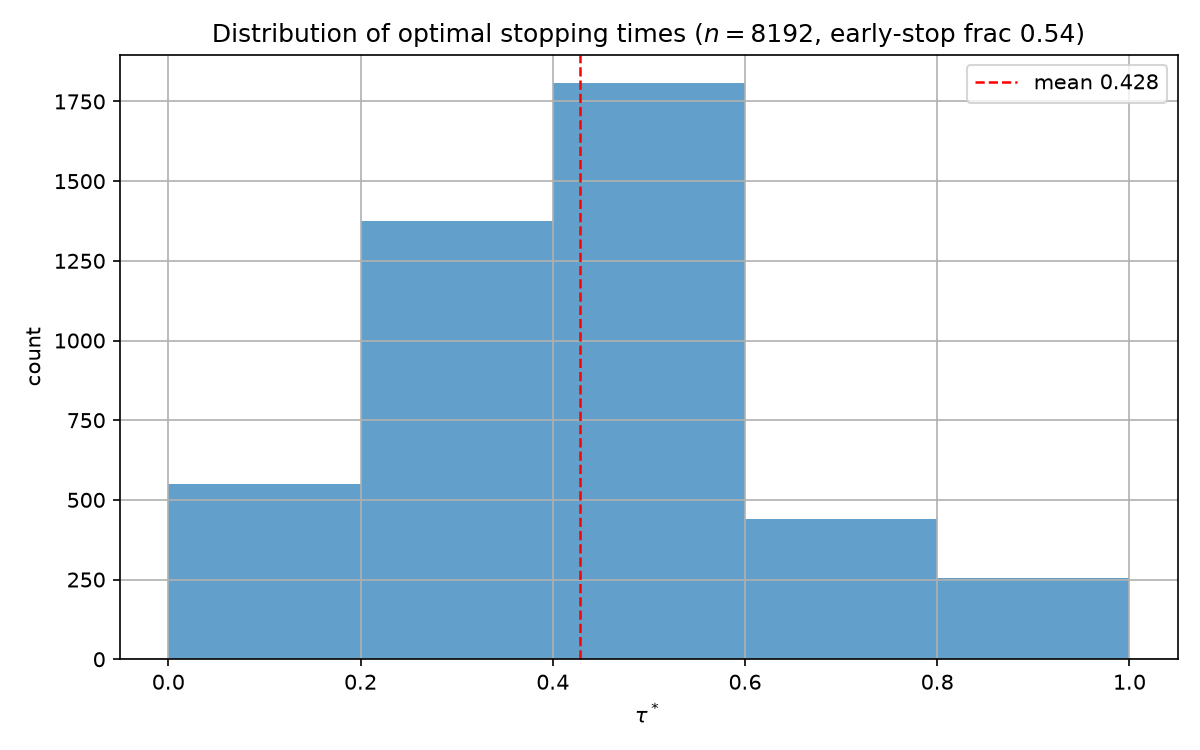}
  \caption{Histogram of $\tau^\ast$ over $8192$ paths.}
  \label{fig:51-tau}
\end{subfigure}\hfill
\begin{subfigure}{0.45\linewidth}
  \centering
  \safeincludegraphics[width=\linewidth]{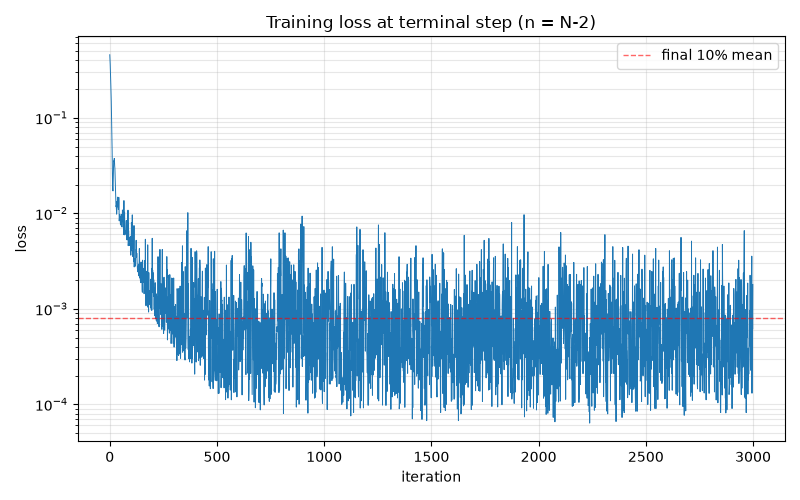}
  \caption{Training loss at the terminal step (log scale).}
  \label{fig:51-loss}
\end{subfigure}
\caption{Bounded discount rate ambiguity (Section~\ref{subsec:bounded-beta}):
diagnostics from a single trained run.}
\label{fig:51-diagnostics}
\end{figure}

\subsection{Geometric American put under entropic-discount ambiguity}
\label{subsec:num-5-4}

This case is the non-Lipschitz target of the convergence analysis in
Section~\ref{sec:numerical_convergence}: the driver of Section~\ref{subsec:put}
carries the quadratic in $z$ term $(\overline\gamma/2)|z|^2$ that lies outside
the Lipschitz framework and motivates the discrete reflection and truncation analysis of Section~\ref{sec:numerical_convergence}. We
demonstrate that the scheme, combined with truncation and the upper reflection
projection, handles the superlinear driver and that the entropic ambiguity has
a measurable effect on the conservative put price relative to
Section~\ref{subsec:num-bs}.

In the liability region $Y \le 0$ the driver of Section~\ref{subsec:put} reduces to
$g(y,z) = -\overline\delta y + (\overline\gamma/2)|z|^2$. We take
$\overline\delta = r$, so the worst case discount equals the risk-free rate,
and $\overline\gamma = 5$ (motivated below). The forward dynamics, payoff and
market parameters are inherited from Table~\ref{tab:bs-params}; the additional
ambiguity parameters are in Table~\ref{tab:5_4-params}.

\begin{table}[!htb]
\centering
\begin{tabular}{l l l}
\hline
Worst-case discount  & $\overline\delta$ & $0.05$ \\
Entropic radius      & $\overline\gamma$ & $5.0$  \\
\hline
\end{tabular}
\caption{Ambiguity parameters, Section~\ref{subsec:put} case.}
\label{tab:5_4-params}
\end{table}

A sweep over $\overline\gamma \in \{0.5, 1, 2, 5, 10\}$ at fixed market
parameters yields $-Y_0$ equal to $0.115$, $0.115$, $0.112$, $0.109$ and
$0.107$ respectively. Below $\overline\gamma \approx 2$ the entropic effect is
buried in the optimal stopping bias; above $\overline\gamma \approx 10$ the
conservative price approaches the intrinsic value $K - S_0 = 0.1$ and the
continuation region collapses. The value $\overline\gamma = 5$ keeps the price
clearly above intrinsic while the quadratic term contributes meaningfully. At
$\overline\gamma = 5$ the solver returns $-Y_0 = 0.10881$, a $-6.5\%$ shift
from the Black--Scholes baseline of Section~\ref{subsec:num-bs}.

Figure~\ref{fig:54-diagnostics} reports the diagnostics. Panel
\ref{fig:54-traj} shows, on the left, the value process $Y_t$ with the moving
obstacle $-\xi_t = -(K-S_t)^+$ for three sample paths; the obstacle constraint
is respected without measurable violation. On the right, the same panel shows
the realised control process $Z_t$, which is the quantity penalised
quadratically by the entropic driver. Panel \ref{fig:54-tau} is the histogram
of $\tau^\ast$. Comparing with Figure~\ref{fig:bs-tau}, two changes are
visible. First, the cluster just below $T$ remains as a consequence of the imposed
terminal contact $Y_T=-\xi_T$. Second, the rest of the distribution shifts toward earlier times:
the early stop fraction rises from $0.41$ in Section~\ref{subsec:num-bs} to $0.82$ here.
This empirical shift is consistent with the increased contribution of the
quadratic $z$ term in the continuation region, although the experiment does
not by itself establish a comparative statics result for the optimal stopping
boundary.
Panel \ref{fig:54-loss} reports the terminal-step training loss; it has the
same shape and the same order of magnitude as in
Figures~\ref{fig:bs-loss}--\ref{fig:51-loss}, with losses at the earlier time
steps smaller by an order of magnitude. The superlinear driver therefore does
not destabilise training under the hyperparameters of
Table~\ref{tab:nn-params}.

\begin{figure}[!htb]
\centering
\begin{subfigure}{0.95\linewidth}
  \centering
  \safeincludegraphics[width=\linewidth]{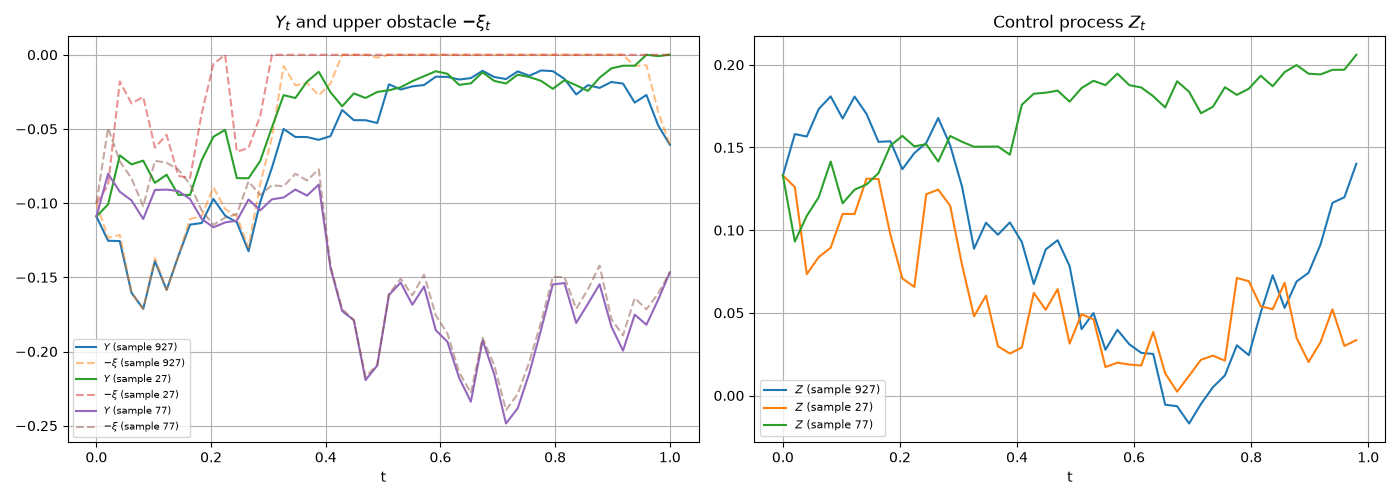}
  \caption{Three sample paths: left, $Y_t$ and obstacle $-\xi_t = -(K-S_t)^+$;
  right, control process $Z_t$.}
  \label{fig:54-traj}
\end{subfigure}\\[1ex]
\begin{subfigure}{0.45\linewidth}
  \centering
  \safeincludegraphics[width=\linewidth]{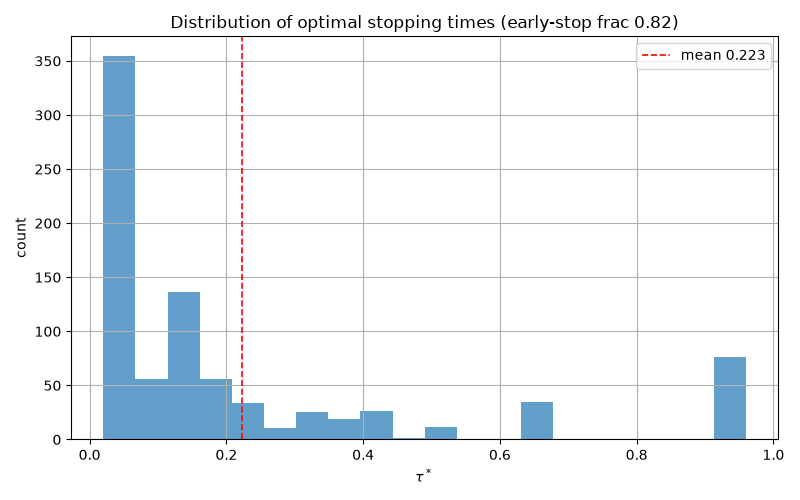}
  \caption{Histogram of $\tau^\ast$ over $1024$ paths
  (early-stop fraction $\approx 0.82$).}
  \label{fig:54-tau}
\end{subfigure}\hfill
\begin{subfigure}{0.45\linewidth}
  \centering
  \safeincludegraphics[width=\linewidth]{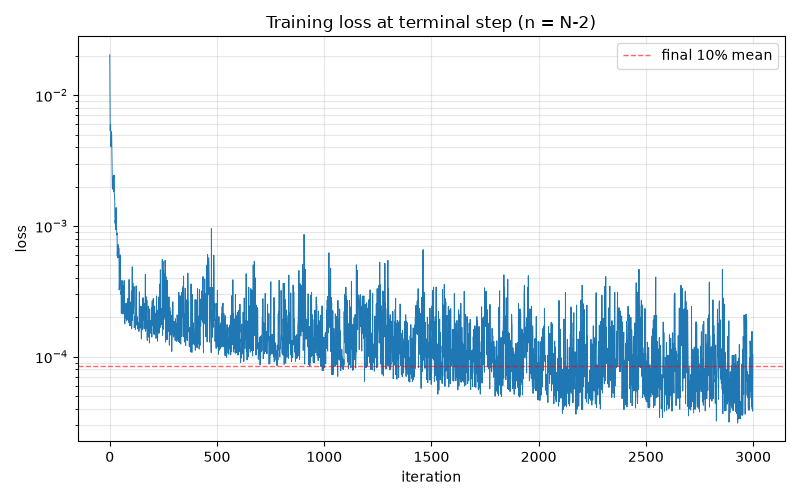}
  \caption{Training loss at the terminal step (log scale).}
  \label{fig:54-loss}
\end{subfigure}
\caption{Entropic-discount American put (Section~\ref{subsec:put}): diagnostics
from a single trained run at $\overline\gamma = 5$.}
\label{fig:54-diagnostics}
\end{figure}

\subsection{Numerical verification of Theorem~\ref{prop: properties}}
\label{sec:numerics:properties}

For each of the cases of Section~\ref{subsec:num-bs}--\ref{subsec:num-5-4} we test
on the trained solver the three properties of Theorem~\ref{prop: properties}
that can be read off from the value $\varphi_0(\xi) = Y_0$:
\begin{itemize}
    \item[(a)] $\varphi_0(\xi + m) \le \varphi_0(\xi)$ for a constant shift
    $m = 0.2$;
    \item[(b)] $\varphi_0(\xi + m) \ge \varphi_0(\xi) - m$ for the same shift;
    \item[(e)] $\varphi_0\bigl(\tfrac12\xi^1 + \tfrac12\xi^2\bigr)
    \ge \tfrac12\,\varphi_0(\xi^1) + \tfrac12\,\varphi_0(\xi^2)$ for a basket
    of two obstacles.
\end{itemize}
Under the standing assumptions of Section~\ref{sec: assumpt}, property (a) holds for every
admissible driver, while property (b) holds for the drivers considered here
because they are decreasing in $y$. Both properties must therefore be
reproduced by the solver in all three cases.
Concavity (e), on the other hand, is established only when the driver is
affine in $(y,z)$. This is the situation of Section~\ref{subsec:num-bs}, where
$g(y) = -r y$; the inequality (e) is therefore expected. In
Section~\ref{subsec:num-5-1} and Section~\ref{subsec:num-5-4} the driver is convex but not
affine, so Theorem~\ref{prop: properties} makes no prediction about the
sign of $\varphi_0(\tfrac12\xi^1 + \tfrac12\xi^2)
- \tfrac12[\varphi_0(\xi^1) + \varphi_0(\xi^2)]$, and the corresponding entry
of the table is reported only as an empirical observation on the chosen
basket.

For each case we train four networks, on the base obstacle and on its shifted,
alternate, and basketed versions, at the hyperparameters of
Table~\ref{tab:nn-params}. We say that an inequality is satisfied numerically
when the corresponding signed margin (defined in the caption of
Table~\ref{tab:properties}) is at least $-0.02$; this threshold is twice the
variation observed across independent training replicates. The results are
collected in Table~\ref{tab:properties}.

The two unconditional inequalities (a) and (b) are satisfied in every case.
Concavity (e) holds in Section~\ref{subsec:num-bs}, in agreement with
Theorem~\ref{prop: properties}(e). In Sections~\ref{subsec:num-5-1}
and~\ref{subsec:num-5-4} the signed margin is within the numerical tolerance, so
the inequality is neither clearly satisfied nor clearly violated on the chosen
baskets. This is consistent with Theorem~\ref{prop: properties}, which asserts
concavity only for affine drivers and makes no prediction here; the experiments
neither confirm nor refute it for the two non-affine cases.

\begin{table}[!htb]
\centering
\begin{tabular}{l c c c}
\hline
Case & (a) & (b) & (e) \\
\hline
Section~\ref{subsec:num-bs}              & $+0.208$ & $+0.012$ & $+0.0044$ \\
Section~\ref{subsec:num-5-1}             & $+0.212$ & $+0.008$ & $-0.009$  \\
Section~\ref{subsec:num-5-4}             & $+0.214$ & $+0.006$ & $+0.011$  \\
\hline
\end{tabular}
\caption{Signed margins of the three inequalities of
Theorem~\ref{prop: properties} at $m=0.2$ and $\lambda=\tfrac12$. The
quantities tabulated are
$\varphi_0(\xi) - \varphi_0(\xi + m)$ for (a),
$\varphi_0(\xi + m) - \varphi_0(\xi) + m$ for (b), and
$\varphi_0(\tfrac12\xi^1 + \tfrac12\xi^2)
 - \tfrac12[\varphi_0(\xi^1) + \varphi_0(\xi^2)]$ for (e). The numerical
tolerance is $0.02$. Baskets: Section~\ref{subsec:num-bs} and
Section~\ref{subsec:num-5-4} use puts with strikes $K_1 = 1.1$, $K_2 = 0.9$;
Section~\ref{subsec:num-5-1} uses $\xi^1_t = c\tanh(X_t/c)$ and
$\xi^2_t = c\tanh\bigl((X_t^2 - 0.5)/c\bigr)$, the collar of
Section~\ref{subsec:num-5-1} applied to both.}
\label{tab:properties}
\end{table}

\bibliographystyle{plain}
\bibliography{references}

\end{document}